\newcommand{\all}{\beta}
\newcommand{\cc}{\mathbb{C}}
\title{Coding theory:  
the  unit-derived methodology.}
\author{Ted
 Hurley\footnote{National Universiy of Ireland Galway, email:
 Ted.Hurley@NuiGalway.ie} 
   \quad and Donny Hurley\footnote{Institute of Technology Sligo, email: hurleyd@yahoo.com}}
\date{} 
\begin{document}
\maketitle

\begin{abstract}\let\thefootnote\relax\footnote{\noindent Keywords: Code, 
Unit-derived schemes, Decoding. \\ 
MSC Classification: 11T71, 68P30, 94A24}
 
The unit-derived method in coding
 theory is  shown to be 
 a unique optimal scheme for constructing
 and analysing codes.  In many cases efficient and practical decoding
 methods are produced. 
Codes  with efficient 
  decoding algorithms at maximal distances possible are  
 derived from unit schemes.  
 In particular unit-derived codes from Vandermonde or Fourier 
 matrices are particularly commendable giving rise to mds codes of
 varying rates with   
 practical and efficient decoding algorithms.

For a given rate and given error correction capability, explicit 
 codes with efficient error correcting algorithms are designed to these
 specifications. An explicit  constructive proof with  an efficient
 decoding algorithm is given for Shannon's theorem. For a given finite
 field, codes are constructed which are `optimal' for this field.  

\end{abstract}

\section{Introduction and background}
Error-correcting codes 
are used extensively in communications' applications 
including digital video,
radio, mobile communication, satellite/space communications and other systems. 

Here the unit-derived method is exploited to design maximum distance
separable codes with efficient decoding algorithms.  For a given rate
and a given error-correcting capability,  codes  with efficient decoding
algorithms are designed  
to these specifications and are shown algebraically to have the required
properties. This is used to give explicit codes with efficient decoding algorithms  to  prove  Shannon's theorem.

Section \ref{layout} gives further details on content and results.  
Samples demonstrating the extent of the constructions are given.
Some well-known codes in practical use are shown to be special cases;
better performing ones  can be designed from the general techniques.

  Background on coding theory   
  may be found in \cite{blahut},\cite{mceliece} and others. Most of the
  algebraic background may  be found in \cite{blahut} and further
  background on algebra and coding theory is developed or referenced as
  required.

Now  $(n,r,d)$ denotes a code of length $n$, dimension $r$ and (minimum) distance $d$. The rate of the code is $\frac{r}{n}$. The code $(n,r,d)$ can correct $t=\floor{\frac{d-1}{2}}$ errors and this is the error-correction 
capability of the code. 
The code is called a {\em maximum distance separable} (mds) code if it of the form $(n,r,n-r+1)$, that is, if  it attains   the maximum distance allowable  for a given length and dimension. 

$GF(q)$ denotes the finite field of $q$ elements where $q=p^s$ is a
power of a prime $p$. The units of $GF(q)$ are the non-zero elements of
$GF(q)$ and these units form a cyclic group generated by a primitive
$(q-1)^{th}$ root of unity in $GF(q)$. For a prime $p$, $GF(p) =\Z_p$,
the integers modulo $p$.

\subsection{Unit-derived codes} In  \cite{hur1}, and  also in \cite{hur0,hur2},  methods are developed  for constructing 
{\em unit-derived  codes};  these  methods are fundamental.  The
unit-derived schemes may be described 
briefly  as follows. Let $R_{n\ti n}$ denote the ring of
 $n\ti n$ matrices with entries from $R$, a ring with identity,   often  a field but not restricted to such.
Suppose  $UV=I_{n\ti n}$ in $R_{n\ti n}$.   
Taking any $r$ rows of $U$
as a generator matrix defines an $(n,r)$ code  and a check
matrix is obtained by deleting the corresponding columns of $V$.
 Further details may be  found in expanded book chapter form in \cite{hur2}.

Now $R$ can be any ring with identity and it has been  useful to consider cases
other than fields; cases where $R$ is taken as a polynomial ring,  a
group ring or as a matrix ring has been useful in constructing different
types of codes such as LDPC codes or Convolutional codes, \cite{hur3},
\cite{hurley33}, \cite{hurconv5,hurconv2}. 


From the unit scheme $UV=I$,  the first $r$ rows in particular of $U$
may be taken  as the generator matrix of a code and then the last $(n-r)$
columns of $V$ give a check matrix for this code.
Thus if $UV= I_n$ and $U= \begin{pmatrix} A \\ B \end{pmatrix}$ for an $r\ti n$ matrix $A$ and an $(n-r) \ti n $ matrix $B$ and $V=(C,D)$ for an $n\ti r$ matrix $C$ and an $n\ti (n-r)$ matrix $D$, this gives 
$UV=\begin{pmatrix}A \\ B \end{pmatrix} (C,D) = I_n$  from which  
$\begin{pmatrix} AC & AD \\ BC & BD \end{pmatrix} =I_n$. 

Thus $AD = 0_{r \ti (n-r)}$ and $D\T$ is a check matrix for the $(n,r)$ code with generator matrix $A$. Note also that $AC=I_{r\ti r}$, the identity $r\ti r$ matrix, and this  will be useful later.

Any linear code is equivalent to a unit-derived code but there may not
be any advantage in using the equivalence. 

Using the unit-derived method has many advantages. Unit-derived codes are in general not ideals; 
cyclic and some other such codes are ideals in group rings.   
  Many different codes of various rates and with predetermined properties may be constructed from a single unit scheme. 
Properties of the units may be used to derive codes of particular types and/or with particular properties. 
From the  set-up,   more information on the code $\mathcal{C}$ is available 
than just its generator and check matrix. Here also efficient
decoding methods for certain unit-derived codes are established. 

In the unit scheme as above,   
$\begin{pmatrix} A \\ B \end{pmatrix} (C, D) = \begin{pmatrix} AC & AD \\ BC & BD \end{pmatrix}$, $A$ is taken as the generator of a code. If $\al A$ is a codeword then $\al A * C = \al$. Hence the original transmitted vector 
$\al$ may be   obtained by multiplying on the right by $C$ once the  errors have been eliminated by an error-correcting method.

\subsection{Layout and summary}\label{layout} 
General theorems, Theorems \ref{genthmvan}, \ref{genthmfour}, required for the constructions and decoding methods are stated in Section \ref{statements}; these are proved later in Section \ref{general}.

Section   \ref{prelim}  presents examples as an
introduction to, and illustration of, the general techniques resulting from Theorems \ref{genthmvan} and \ref{genthmfour}. 
These examples have interest in themselves,  have full distances and implementable practical decoding algorithms. The examples are far from exhaustive and could be considered as prototypes for many others. 

An illustrative example in Subsection \ref{decode} demonstrates the decoding  method which is later derived in general in Section \ref{decoding}.   

Section \ref{general} introduces  the general method and derives
background 
results from which the properties of the unit-derived codes may be
deduced and from which the decoding algorithms are created.  Results
on Vandermonde/Fourier 
matrices are developed;  
 unit-derived codes from these are
particularly commendable with schemes for deriving maximum distance separable  codes with practical decoding algorithms. Section \ref{decoding} derives the general decoding algorithms. 
 
 
Section \ref{rate} describes  the
 general method of constructing codes with required rate and required
 error-correcting capability; Section \ref{example1}, 
 gives examples of such  required yield constructions. 
Section \ref{Shannon} uses the methods to derive an explicit proof of
 Shannon's theorem with  an efficient decoding algorithm.

Section \ref{best} notes `optimal' codes
 for  a particular  finite field.

The use of the unit-derived method 
for defining and analysing particular types of codes such as LDPC (Low
density parity check) codes, Convolutional Codes and others is discussed
in Section \ref{recent}. Section \ref{mceliece} suggests using the codes
for cryptographic schemes. 



\section{Construction of  special types\label{recent}\footnote{This section is independent of the succeeding  sections.}}

Low density parity check (LDPC) codes and convolutional codes attract much attention. Unit schemes are and have been used to generate such codes by relating the prescribed properties to properties of the units from which they are derived.  

\subsection{Low density} A low density parity check (LDPC) code is a
linear code where the check matrix has {\em low density} which means
that each row and column has only a small number of non-zero entries
compared to the size of the matrix. 

An LDPC code may  be obtained from a unit scheme 
$UV  = I_n$.  
To do this,  we must be able to choose  columns of $V$ to form a (check) matrix
which has  low density compared to its  size. The columns of $V$ chosen
decide the rows of $U$ to be used in generating the code. See \cite{hur3}
and \cite{hurley33} for further details.   

One way to ensure that any choice of rows 
 will be an LDPC code is to ensure that
$V$ itself has low density in all its rows and columns. 
  Indeed from such a unit system with
$V$ of low density many (different) LDPC codes can be generated. 
It is also possible to find in general such  $V$ of low density 
so that the resulting
LDPC codes have no short cycles \cite{hurley33}; LDPC codes with no short
cycles in the check matrix  are known to perform well.

 It may be shown that an  LDPC code is  equivalent to one derived from a unit scheme. 

This method has been used successfully in \cite{hurley33} to generate large length  LDPC codes with excellent performances.

\subsection{Convolutional codes} The unit-derived method may be used
to describe, define and study properties of Convolutional Codes, see
\cite{hur3}, \cite{hurconv5} 
; here the unit schemes
are over certain rings other than fields, such as polynomial rings or
group rings. The reference \cite{hur3} in book chapter form is particularly written as an introduction to these methods.  

The constructions in \cite{hurconv2} may be considered as  unit-derived  convolutional code construction schemes which have parallels to the (linear)  block code unit-derived schemes developed here.    
\subsection{Using group rings} Using the embedding of a group rings into a group of matrices, \cite{hur7}, allows the construction of self-dual, dual-containing,  quantum codes, \cite{hurley44}, and other types from units in group rings. Cyclic codes are ideals in the group ring of the cyclic code. Unit-derived codes in general are not ideals.  
\subsection{McEliece type encryption}\label{mceliece} 
 The  codes that are or can be constructed from the unit-derived codes
 developed here can have large length, have good error capability and
 good decoding capability and are thus suitable candidates for McEliece
 type encryption \cite{mceliececrypt}. The problem with low rate data
 can be eliminated. Permutation of the rows and different selections may be used. This should be compared with the cryptographic schemes of \cite{hur100}. 

\section{Main general results}\label{statements} Statements of the 
 results from which the general constructions and decoding methods are
 derived  are given in this section. The proofs of these follow from work  in
 Sections \ref{general} and  \ref{decoding}. 

Recall that an mds, maximum distance separable, code is one of the form
$(n,r,n-r+1)$ in which the maximum possible distance is obtained for a given length and dimension,  see \cite{blahut} for details. 

\begin{theorem}\label{genthmvan} Let $V=V(x_1,x_2, \ldots, x_n)$ be a Vandermonde $n\ti n$ matrix over a field $\F$ with distinct and non-zero $x_i$. Let $\mathcal{C}$ be the unit-derived code obtained by choosing in order $r$ rows of $V$ in arithmetic sequence with difference $k$. If $(x_ix_j^{-1})$ is not a $k^{th}$ root of unity for $i\neq j$ then $\mathcal{C}$ is an $(n,r,n-r+1)$ mds code over $\F$. 

In particular the result holds for consecutive rows as then  $k=1$ and 
 $x_i\neq x_j$ for $i \neq j$.  
\end{theorem}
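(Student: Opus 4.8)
The plan is to work directly with the generator matrix of $\mathcal{C}$ and to reduce the maximum-distance-separable property to the non-vanishing of a family of Vandermonde determinants. Writing the rows of $V$ as indexed by the exponents, so that the row corresponding to exponent $m$ is $(x_1^m, x_2^m, \ldots, x_n^m)$, the choice of $r$ rows in arithmetic sequence with common difference $k$ (starting, say, at exponent $e$) produces the $r \times n$ generator matrix $A$ whose $(l,i)$ entry is $x_i^{\,e + (l-1)k}$, for $1 \le l \le r$ and $1 \le i \le n$. Here the $n$ columns are indexed by the evaluation points $x_1, \ldots, x_n$, matching the code length, and I assume the indices $e, e+k, \ldots, e+(r-1)k$ stay within range so that these rows genuinely exist.

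First I would invoke the standard generator-matrix characterisation of mds codes: an $(n,r)$ code with full-rank $r \times n$ generator matrix $A$ is an $(n,r,n-r+1)$ mds code precisely when every $r$ columns of $A$ are linearly independent, i.e. when every $r \times r$ submatrix of $A$ is nonsingular. One inclusion is the Singleton bound $d \le n-r+1$; the other follows because a nonzero codeword $vA$ vanishing on $r$ coordinates forces the corresponding $r \times r$ minor of $A$ to be singular, and conversely. Thus it suffices to show that every $r \times r$ minor of $A$ is nonzero.

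Next I would compute a generic such minor. Fix column indices $j_1 < j_2 < \cdots < j_r$ and let $M$ be the corresponding $r \times r$ submatrix, with $(l,s)$ entry $x_{j_s}^{\,e+(l-1)k}$. Factoring $x_{j_s}^{\,e}$ out of the $s$-th column leaves a matrix whose $(l,s)$ entry is $\bigl(x_{j_s}^{k}\bigr)^{l-1}$, a genuine Vandermonde matrix in the quantities $y_s := x_{j_s}^{k}$. Hence
\[
  \det M = \Bigl(\prod_{s=1}^{r} x_{j_s}^{\,e}\Bigr)\,\prod_{1 \le a < b \le r}\bigl(x_{j_b}^{k} - x_{j_a}^{k}\bigr).
\]
Since every $x_i$ is nonzero the leading product never vanishes, so $\det M \neq 0$ exactly when $x_{j_b}^{k} \neq x_{j_a}^{k}$ for all $a \neq b$; equivalently, when $(x_{j_b} x_{j_a}^{-1})^{k} \neq 1$, that is, when $x_{j_b} x_{j_a}^{-1}$ is not a $k^{th}$ root of unity. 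This is precisely the hypothesis, assumed for all $i \neq j$ and hence valid for every admissible choice of columns, so every $r \times r$ minor of $A$ is nonzero and $\mathcal{C}$ is $(n,r,n-r+1)$ mds.

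For the final claim, setting $k=1$ makes the condition read that $x_i x_j^{-1}$ is not a first root of unity, i.e. $x_i x_j^{-1} \neq 1$, which is simply $x_i \neq x_j$; this is guaranteed by the standing assumption that the $x_i$ are distinct. The determinant step is routine once the set-up is in place, so the main obstacle I anticipate is organisational rather than computational: fixing the orientation of $V$ so that choosing rows in arithmetic sequence with difference $k$ really does translate into exponents in arithmetic progression (and hence into $k^{th}$ powers after the column scaling), and citing the generator-matrix form of the mds criterion rather than the parity-check form that the unit-derived framework more naturally suggests.
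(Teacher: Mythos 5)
Your proof is correct and is essentially the paper's own argument: the paper's Proposition \ref{van1} and its corollaries carry out exactly your column-factoring computation, reducing each $r\times r$ minor of the generator matrix to a nonzero scalar times a Vandermonde determinant in the $k^{th}$ powers $x_{j_s}^{k}$, whose non-vanishing is then equivalent to the hypothesis that no ratio $x_ix_j^{-1}$ is a $k^{th}$ root of unity. The paper draws the mds conclusion both through the dual code and, as you do, directly from the criterion that any $r$ columns of a generator matrix be linearly independent, so there is no substantive difference in approach.
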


For Fourier matrices the following theorem is obtained:

\begin{theorem}\label{genthmfour} 

(i) Let $F_n$ be a Fourier $n\ti n$ matrix over a field $\F$. 
 Let $\mathcal{C}$ be the
 unit-derived code obtained by choosing in order $r$ rows of $V$ in
 arithmetic sequence with arithmetic 
difference $k$ and  $\gcd(n,k) = 1$. Then
 $\mathcal{C}$ is an mds $(n,r,n-r+1)$. In particular this is true when
 $k=1$ that is, when the $r$ rows are chosen in succession. 

(ii) Let $\mathcal{C}$ be as in part (i). Then there exist efficient encoding and decoding algorithms for $\mathcal{C}$.

\end{theorem}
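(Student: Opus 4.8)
The plan is to deduce part (i) directly from Theorem \ref{genthmvan} and to settle part (ii) by reducing $\mathcal{C}$ to a Reed--Solomon/BCH code, for which efficient algebraic decoding is classical. First, for part (i), I would record that a Fourier matrix $F_n$ is the special Vandermonde matrix $V(1,\omega,\omega^2,\ldots,\omega^{n-1})$, where $\omega\in\F$ is a primitive $n$th root of unity, so that $x_j=\omega^{j-1}$; these are distinct and non-zero, as Theorem \ref{genthmvan} requires. The only remaining hypothesis is that $x_ix_j^{-1}=\omega^{i-j}$ is not a $k$th root of unity for $i\neq j$. Now $(\omega^{i-j})^k=1$ iff $n\mid k(i-j)$, and since $\gcd(n,k)=1$ this happens iff $n\mid(i-j)$; as $1\le i,j\le n$ with $i\neq j$ forces $0<|i-j|<n$, we get $\omega^{k(i-j)}\neq 1$, so Theorem \ref{genthmvan} yields that $\mathcal{C}$ is an $(n,r,n-r+1)$ mds code, with $k=1$ recovering the consecutive-row case.

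For the encoding half of part (ii), I would note that a codeword is $\al A$ where $A$ consists of the chosen $r$ rows of $F_n$, so encoding is a partial discrete Fourier transform; evaluating it through the fast Fourier transform costs $O(n\log n)$ field operations (one needs only $\car\,\F\nmid n$, which is implicit in the existence of $F_n$). Recovery of the message from a corrected codeword is immediate from the unit scheme, since $\al A * C=\al$, i.e.\ right multiplication by $C$ returns $\al$.

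The decoding half is where the real work lies, and I would base it on a reduction to the case $k=1$. Writing the chosen (0-indexed) rows as $a,a+k,\ldots,a+(r-1)k$, the $(t,j)$ entry of the generator is $\omega^{(a+tk)j}=\omega^{aj}(\omega^{kj})^{t}$. The factor $\omega^{aj}$ is a per-column unit, giving a monomial (hence weight- and distance-preserving) equivalence, and setting $y_j=\omega^{kj}$ turns the remaining matrix into $(y_j^{t})$; because $\gcd(n,k)=1$ the map $j\mapsto kj \bmod n$ permutes the columns, so $\{y_j\}$ is precisely the set of $n$th roots of unity. Thus $\mathcal{C}$ is monomially equivalent to the code generated by $r$ \emph{consecutive} rows of a Fourier matrix, i.e.\ a Reed--Solomon/BCH code, and the full classical pipeline applies: compute the syndrome $s=D\T w\T$ (again a DFT, via FFT), solve the key equation for the error-locator polynomial by Berlekamp--Massey or the extended Euclidean algorithm, locate errors by a Chien search, and obtain their values by Forney's formula, all in polynomial (indeed quasi-linear with FFT) time.

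The main obstacle I anticipate is the bookkeeping that justifies this reduction: one must check that the complementary columns of $V=F_n^{-1}$ really do deliver, after the permutation $j\mapsto kj$ and the monomial scaling, a block of $n-r=d-1$ consecutive spectral components of the error, since it is exactly this ``consecutive syndromes'' property that licenses the key equation and guarantees correction of all $t=\floor{\frac{n-r}{2}}$ errors. Verifying that the permutation and scaling preserve the minimum distance (hence the correction radius), and tracking them back through the final recovery step $\al A * C=\al$, is the delicate part; once this is in place, part (ii) follows from the standard efficiency guarantees for Reed--Solomon decoding.
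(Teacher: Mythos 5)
Your proof is correct, and while part (i) follows the paper's own route, your part (ii) is genuinely different from what the paper does. For (i), the paper argues exactly as you do: it views $F_n$ as the Vandermonde matrix $V(1,\omega,\ldots,\omega^{n-1})$ and checks the root-of-unity hypothesis by your computation ($\omega^{k(i-j)}=1$ forces $n\mid(i-j)$ when $\gcd(n,k)=1$, impossible for $0<|i-j|<n$); this is Corollary \ref{fouri} and Proposition \ref{fourier} of Section \ref{general}. For (ii), however, the paper never passes to an equivalent code. It works directly on the chosen rows, using the star-product identity $E_i*E_j=E_{i+j}$ to exhibit an explicit $t$-error-correcting pair in Pellikaan's sense \cite{pell} --- for the first-$r$-rows case, $A=\langle E_1,\ldots,E_{t+1}\rangle$ and $B=\langle E_0,\ldots,E_{t-1}\rangle$ --- and then, following \cite{hurleycomp}, translates the pair into a concrete linear-algebra decoder (Algorithms \ref{al1} and \ref{algor1} in Section \ref{decoding}): find a nonzero kernel vector of the $t\times(t+1)$ Hankel matrix of syndromes, read off error locations as the zero coordinates of $(E_1,\ldots,E_{t+1})$ applied to that vector, and solve a Vandermonde system for the error values, with total cost $\max\{O(n\log n),O(t^2)\}$ (Section \ref{complexity}); the arithmetic-sequence case is handled uniformly there, not by reduction to $k=1$. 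Your route instead scales out the per-column factor $\omega^{aj}$, uses the permutation $j\mapsto kj \bmod n$ (valid since $\gcd(n,k)=1$) to identify $\mathcal{C}$ up to monomial equivalence with the consecutive-row Reed--Solomon code, and then invokes the classical pipeline (Berlekamp--Massey or Euclid, Chien search, Forney). Both are sound: your reduction buys access to off-the-shelf, heavily optimized RS decoders, at the price of the bookkeeping you rightly flag (transporting syndromes, error locations, and the final recovery $\alpha A C=\alpha$ back through the scaling and permutation); the paper's approach buys a decoder that runs on the code exactly as generated, with no equivalence to track. It is worth noticing that the two algorithms are close cousins: the Hankel-kernel step is in essence the computation of an error-locator, the zero-location step is a Chien-type search, and the Vandermonde solve plays Forney's role, so your anticipated ``delicate part'' is the only real difference in substance.
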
  

The decoding methods are based on the decoding methods used in
\cite{hurleycomp} in connection compressed sensing by solving underdetermined
systems using error-correcting codes. These decoding methods  themselves
are based on the error-correcting methods due to  Pellikaan \cite{pell}
which is a method of finding error-correcting pairs.  
  
The complexity of encoding and decoding can be  $\max \{ O(n\log
n), O(t^2)\}$ where $t= \floor{\frac{n-r}{2}}$, that is where 
$t$ is the error-correcting capability of the code. The complexity is
discussed in Section \ref{complexity}.

\section{Initial cases}\label{prelim} Initial cases are presented  as an
introduction to, and illustration of, the general
techniques.   


The examples have interest in themselves and have  practical decoding
algorithms.  They also serve as prototypes as to  how general 
and longer length mds codes with efficient decoding algorithms may
 be constructed  using the unit-derived method with  Vandermonde/Fourier
 matrices. For the proofs that the codes constructed satisfy the mds and
 other properties, the reader is referred to Section \ref{general} and 
 for the decoding algorithms the reader should  consult Section
 \ref{decoding}. 

  The reader might appreciate for comparison the mds codes (Section
  \ref{char2})  of   types \\ $(255,253,3), (255,251,5),  
...,(255,155,101), ..., $ or in general of type $(255,r,256-r)$,  constructed over $GF(2^8)$ together with
decoding algorithms. The methods may be extended to form
mds codes  over $GF(2^s)$ with decoding
algorithms. 
It is shown that codes of the form
$(256,r,257-r)$ may be generated over the prime field $GF(257)$ with
decoding algorithms and these perform better. 

\subsection{To err is ...} If a code is required to correct one error it must have distance $\geq 3$. If the length is also $\leq 3$ then the code is equivalent to a repetition code, one of the form $(3,1,3)$.   

For a code of length $4$ to be 1-error correcting, and not a repetition
code, it must be a $(4,2,3)$
mds code.  
Look at unit-derived codes from  Fourier $4 \ti 4$ matrices for such. No $4\ti 4$ Fourier matrix exists in characteristic $2$ as $2 \vert 4$. 
Consider characteristic $3$. Now $3^2-1 = 8$ so there exists an element
of order $8$ in $GF(3^2)$ and thus an element of order $4$ exists in
$GF(3^2)$. To construct $GF(3^2)$ use a primitive polynomial of degree
$2$ over $\Z_3 = GF(3)$ such as $x^2+x+2$. Then $x $ has order $8$ and
$x^2=\om $ has order $4$. Now form the $4\ti 4$ Fourier matrix $F_4$
over $GF(3^2)$ with $\om$ as the primitive $4^{th}$ root of 1.

By general theory, the first two rows or any two rows in succession
of a Fourier $F_4$ matrix  gives a generator matrix of a $(4,2,3)$ code.
  The rate
of these codes is $\frac{2}{4}=\frac{1}{2}$. 

Row 4 followed by row 1 also
works but note that row 1 with row 3 will not give an mds code. Why? 
    
The order of $GF(5) \backslash 0$ is
$4$. Then it is required to find an element of order $4$ in $GF(5)$ and
it is easily checked that $2$ has order 4 modulo $5$ as $3$. Now
form the Fourier $4 \ti 4$ matrix over $GF(5)$ using $2 \mod 5$ as the
primitive element: 
$F_4 =  \begin{pmatrix} 1 &1 &1 &1 \\ 1 & 2 & 4 & 3 \\ 1 & 4 & 1 &4 \\ 1
	& 3 & 4 & 2 \end{pmatrix}$. If the matrix is over $GF(5)$, the
calculations can all be done with modulo 5 arithmetic.

A length $5$ code could also correct 1 error if it is of the form
$(5,3,3)$. The rate here is $\frac{3}{5}$. What is required is a Vandermonde or
Fourier matrix of size $5 \ti 5$ over a field. Such can be constructed
in $GF(2^4), GF(3^4), ...$ but not in characteristic $5$ of course.  

For a length $6$ code it is required to construct a Vandermonde or
Fourier $6\ti 6 $ matrix and extract codes from the rows using the
unit-derived method. A $(6,2,5)$
code can correct 2 errors but the rate is small. Consider constructing
$(6,4,3)$ codes with 1-error correcting capability and rate $\frac{2}{3}$. 
$GF(7)$ has elements of order $6$ such as  $3$ or $5$ and these can
be used to construct a Fourier $6\ti 6$ matrix over $GF(7) =
\Z_7$. Taking the first four rows or any four rows in succession will
generate a $(6,4,3)$ code over $GF(7)$.  

All the small length codes mentioned here and below may be constructed  directly
using for example a package such as GAP,    
 containing  the coding sub-package GUAVA,  reference \cite{gap}.   

\subsection{Worked example of decoding algorithm}\label{decode} In
Section \ref{general} decoding algorithms are derived. Here an example
of the workings  of the decoding algorithms developed later is given.  

Let  $\F = GF(29)$. A generator of $\{\F\backslash 0\}$ has order
$28$. We are interested in a Fourier $7\ti 7$ matrix over $\F$. An
element of order $7$ is easily obtained in $\F$ and indeed $7^7 \equiv 1 \mod
29$.  

Consider then the unitary scheme:

$$\begin{pmatrix}1&1&1&1&1&1&1 \\ 1 &\om & \om^2 & \om^3 &\om^4 & \om^5 &\om^6 \\ 1 & \om^2 & \om^4 & \om^6 & \om &\om^3 & \om ^5 \\ 1 & \om^3& \om^6& \om^2& \om^5 & \om & \om^4 \\ 1 & \om^4 & \om & \om^5 & \om^2&\om^6&\om^3 \\ 1&\om^5 &\om^3&\om&\om^6&\om^4&\om^2 \\ 1 & \om^6&\om^5&\om^4&\om^3&\om^2&\om \end{pmatrix}
\begin{pmatrix}1&1&1&1&1&1&1 \\ 1 &\om^6 & \om^5 & \om^4 &\om^3 & \om^2 &\om \\ 1 & \om^5 & \om^3 & \om & \om^6 &\om^4 & \om ^2 \\ 1 & \om^4& \om& \om^5& \om^2 &\om^6 & \om^3 \\ 1 & \om^3 & \om^6 & \om^2 & \om^5&\om&\om^4 \\ 1&\om^2 &\om^4&\om^6&\om&\om^3&\om^5 \\ 1 & \om &\om^2&\om^3&\om^4&\om^5&\om^6 \end{pmatrix} = 7I$$

where $\om$ is a primitive $7^{th}$ root of unity. Here we may take $\om = 7 \mod 29$ and powers of $7$ are evaluated $\mod 29$.  Other values for $\om$ are possible and what is required is an element of order $7$ modulo $29$. \footnote{That $\om = 7 \mod 29$ is used here is coincidental to the size of the matrix.} Let the first matrix above be denoted by $P$ and the second by $Q$. Thus $PQ=7*I$ which is the unit scheme $P\{\frac{1}{7}Q\}= I$. Now choose $r$ rows of $P$ to form a matrix which generates a $(7,r)$ code and a check matrix for this code is obtained from $Q$ by eliminating the columns corresponding to the chosen rows of $P$; in theory the check matrix is from $1/7*Q$ but if $H$ is a check matrix then so is $7*H$. 

From $P$ then $(7,3,5)$ and $(7,5,3)$ codes may be obtained by taking in particular the first 3 rows or 5 rows of $P$ or indeed by taking the required number of rows consecutively from $P$. The general theory which verifies this, including the distances obtained,  is given  in Section \ref{general} below. 

A $(7,5,3)$ code is 1-error correcting. Take the first 5 rows of $P$ as the generator matrix $A$ and then the last two columns, $D$, of $V$ is the check matrix. 
A codeword is $\al A$ for a $1\ti 5$ vector $\al$. Suppose $\al A+ \ep $ is received where $\ep$ is the error and has just one non-zero entry. Applying $D$ to $\al A + \ep $ gives $\ep D$. Now $\ep D$ is a multiple of a row of $D$ as $\ep$ has only one non-zero entry, and this uniquely defines the row and its multiple. 
Thus the error $\ep$ may be eliminated. When the error has been eliminated, then $\al A * C =7*\al$ decodes the word where $C$ denotes the first 5 columns of $Q$.

This decoding method of identifying the multiple of the row of the check
matrix works whenever just 1-error needs correcting. 

A 2-error correcting code $(7,3,5)$ is  obtained from this unit scheme
by taking any three rows of $P$ as a generator matrix. The code may be
corrected as following; the details of the algorithm may be found in
\cite{hurleycomp} which was derived from the error-correcting methods of
Pellikaan \cite{pell}. The algorithm utilises error-correcting pairs
which are shown to exist for these codes. 

Suppose the first 3 rows are the generator matrix of a code $\mathcal{C}$. Then the last 4 columns of $Q$ constitute a check matrix. Let these columns 
 be denoted by $\{E_4\T,E_3\T,E_2\T,E_1\T \}$ in order. Then $\mathcal{C}\T$ is generated by these columns, written as rows. The first three rows of $P$ are $\{ E_0,E_1,E_2\}$ where $E_0$ consists of all $1^s$. 

Now by \cite{hurley} and \cite{pell} an error-correcting pair for $\mathcal{C}$ is as follows:

 $U= \langle E_1,E_2,E_3 \rangle, V=\langle E_0,E_1 \rangle $ are error correcting pairs for $\mathcal{C}$. 


Let $\al A$ be the codeword but when transmitted an error is introduced and the word received is $\al A + w$. Note $w$ is a $1\ti 7$ vector. 
Apply the check matrix which has columns $\{E_4\T,E_3\T, E_2\T,E_1\T\}$ and then $<w,E_i> = w E_i\T=E_i w\T$ are known for $i=1,2,3,4$ where $<,>$ denotes inner product. Let  $<w,E_1> = \al_1, <w,E_2>=\al_2,<w,E_3>=\al_3, <w,E_4>=\al_4$. The algorithm then is:

\begin{enumerate} \item Find an element $x\T$ in the kernel of $\begin{pmatrix} \al_1 & \al_2 & \al_3 \\ \al_2 & \al_3 & \al_4 \end{pmatrix}$. Any non-zero element of the kernel will do.
\item Form $ \underline{a} =(E_1, E_2, E_3)x\T$.  
\item Find the locations of the zero coefficients of $\underline{a}$. Say these are at $j_1,j_2$ for $1\leq j_1,j_2 \leq 7$. 
\item Solve $\begin{pmatrix} E_{1,j_1} & E_{1,j_2} \\  E_{2,j_1} & E_{2,j_2} \\  E_{3,j_1} & E_{3,j_2} \\  E_{4,j_1} & E_{4,j_2} \end{pmatrix} \begin{pmatrix} x_1 \\ x_2 \end{pmatrix} = \begin{pmatrix} \al_ 1 \\ \al_2 \\ \al_3 \\ \al_4 \end{pmatrix}$. Here $E_{k,l}$ denotes the $l^{th}$ entry of $E_k$.
\item $w$ is then $x_1$, located at $j_1$,  and $x_2$, located at $j_2$, and zeros elsewhere. \end{enumerate}  

Suppose now  that $\om =7 \in GF(29)$ is taken as the $7^{th}$ root of unity of the Fourier matrix and the  $\al_i$ are found to be: 
$\al_1=18, \al_2=15, \al_3=4,\al_4=12$.
 Then \begin{enumerate} 
\item An element in $\ker \begin{pmatrix} 18 & 15 & 4 \\ 15 & 4 & 12\end{pmatrix}$ is $x\T=(23,5,1)\T$
\item  $ \underline{a} =(E_1, E_2, E_3)x\T = (0,24,20,1,0,2,11)$. This
       has zeros at positions $j_1=1,j_2=5$.
\item Solve  $\begin{pmatrix} E_{1,j_1} & E_{1,j_2} \\  E_{2,j_1} &
	      E_{2,j_2} \\   E_{3,j_1} & E_{3,j_2} \\  E_{4,j_1} &
	      E_{4,j_2} \end{pmatrix} \begin{pmatrix} x_1 \\ x_2
				      \end{pmatrix} = \begin{pmatrix}
						      \al_ 1 \\ \al_2 \\
						      \al_3 \\ \al_4
						      \end{pmatrix}$ is
		   then solve  $\begin{pmatrix} 1 & 23 \\  1 & 7 \\  1 &
				16 \\  1 & 20 \end{pmatrix}
      \begin{pmatrix} x_1 \\ x_2 \end{pmatrix} = \begin{pmatrix} 18 \\ 15
						 \\ 4 \\  12
						 \end{pmatrix}$. This
      has solution $x_1=1, x_2 =2$. 
\item Then the error is $x_1$ located at $j_1=1$ position  and $x_2$ located at position $j_2=5$ giving the error vector $w=(1,0,0,0,2,0,0)$.

\end{enumerate}

The calculations in this case are all done in $\Z_{29} =GF(29)$. 

\subsection{Further samples} 
\subsubsection{$11 \ti 11$ cases}

Suppose  a Vandermonde or Fourier $11\ti 11$ matrix $F_{11}$ over a field $\F$ has been  found. 
Now choose rows consecutively\footnote{Other choices are possible.} to construct codes, and error-correcting pairs exist for these codes. In Section 
\ref{general} below it is shown that such codes from $F_{11}$ are mds, maximal distance separable codes and decoding methods are derived in Section \ref{decoding}.   

Thus $(11,3,9)$ codes which have 4-error correcting capability, $(11, 5,7)$ which have 3-error capability, $(11,7,5)$ which have 2-error correcting capability, and $(11,9,3)$ which have 1-error capability ability are obtained. The decoding algorithms reduces to finding $t$-error correcting pairs. 


An example of such a field  which has an easily workable $11^{th}$ of
unity is $GF(23)$. The group of non-zero elements in $GF(23)$ is of
order $22$ and is cyclic so elements of order $11$ exist. In fact $2
\mod 23$ or $3 \mod 23$
have order $11$ in $GF(23)$ and either of these 
may be used as a primitive $11^{th}$ root of unity in forming $F_{11}$. 
The calculations   in this case are arithmetic  modulo $23$. 

Let $F_{11}$ denote the Fourier matrix in $GF(23)$ with $\om=2 \mod 23$ as the
primitive $11^{th}$ root of unity. 
Take consecutive rows or else select rows in arithmetic sequence of
their order.  An efficient  decoding algorithm  using error correcting pairs
exists for these codes is given generally in Section \ref{decoding}; the algorithm  is derived from \cite{hurleycomp}. 

Notice that $11$ divides $2^{10}-1$ so the Fourier matrix of size $11
\ti 11$ can also be constructed over $GF(2^{10})$. However this field is
large and calculations may be  more difficult. But see Section
\ref{char2} below for discussion of characteristic $2$ cases which have
other advantages.  

Note that  $11$ divides $3^5-1$ so the field $GF(3^5)$ could also be used.

\subsubsection{$13 \ti 13 $ cases}
For $13 \ti 13$ Fourier matrices there are a number of possibilities. To
work  in modular arithmetic take $\F = GF(53)$ as $13$ divides
$(53-1)=52$, and then there exists primitive $13^{th}$ of unity. In fact
$10^{13}\equiv 1 \mod 53$ so $10 \mod 53$ may be used as the primitive $13^{th}$
root of unity in $GF(53)$ in forming the Fourier $13 \ti 13$ matrix. 

In $GF(3^3)$ also there exists a $13^{th}$ root of unity as $3^3-1=26 =
2*13$. So indeed the square of the generator of the non-zero elements of
$GF(3^3)$ is a primitive $13^{th}$\footnote{Note that
$13$ is a base $3$ repunit.} root of unity.  Use  an irreducible primitive polynomial of
degree $3$ in $\Z_3=GF(3)$ with which the calculations may be made in
$GF(3^3)$.    

\subsection{Characteristic 2 cases}\label{char2}

Characteristic $2$ cases are always interesting and this is indeed the case with these unit-derived codes from Vandermonde/Fourier matrices. 

Codes over $GF(2^s)$ may be transmitted as binary signals. 
 The code symbols are within $GF(2^s)$. If each code symbol is represented by an $s$-tuple over $GF(2)$, then the code can be transmitted using binary signalling. In decoding, every $s$ received bits are grouped into a received signal over $GF(2^s)$. 

\begin{enumerate} 

\item As $2^2-1=3$ so $3\ti 3$ Fourier matrices over
			$GF(2^2) $ can be obtained and mds codes may be
			derived from this. These however are equivalent
			to repetition codes $(3,1,3)$ or to codes of the
			form $(3,2,2)$ which do not have
			error-correcting capabilities. 
\item $2^3-1 = 7$ gives a Fourier $7\ti 7$ matrix over $GF(2^3)$. Thus codes $(7,3, 5)$ which are $2$-error correcting and codes $(7,5,3)$ which are 1-error correcting may be formed over $GF(2^3)$.
\item $2^4-1=15$ and so $(15,13,3), (15, 11,5), (15,9,7), (15, 7,9)$ codes can be formed by this method over $GF(2^4)$. 
\item $2^5-1=31$, which is prime, enables $(31, 29, 3), (31,27,5), (31,25,7), (31,23,9) ,....$  codes to be formed over $GF(2^5)$. If rate about $3/4$ is required then take $(31, 23,9)$ which is $4$-error correcting. 
\item $2^6-1=63$. Codes of form $(63,r, 64-r)$ may be formed with efficient 
 error-correcting algorithms. 
\item $2^7-1=127$. Fourier $127\ti 127$ matrices may be formed over $GF(2^7)$. Note that $127$ is prime, in fact a Mersenne prime, and Fourier matrices of length a Mersenne prime are interesting. Here mds codes of form $(127, 125, 3), (127,123,5), ...,(127, 87,41),...., $ may be formed using unit-derived codes from this Fourier matrix over $GF(2^7)$.  Note for example that $(127,97,31)$ has rate 
$\frac{97}{127} > \frac{3}{4}$ and can correct $15$ errors. 

Use a prime field? From  the prime field $GF(127)$ a Fourier $126 \ti 126$ matrix may be formed with elements from  $GF(127)=\Z_{127}$ and unit-derived codes may be constructed from this; the algebra then is $\mod 127$. 
\item Now $2^8-1 =255$ and this is an interesting case as mds codes over
      $GF(2^8)$ are in practical use. The Reed-Solomon (see for example
      \cite{blahut}),  $(255, 239,17)$ code over $GF(2^8) $ is used
      extensively in data-storage systems, hard-disk drives and optical
      communications; the Reed-Solomon $(255, 223,33)$ code  over
      $GF(2^8)$ is or was the NASA standard for deep-space and satellite
      communications.  

Form the Fourier $255\ti 255$ matrix using a primitive $255^{th}$ root of unity in $GF(2^8)$. A primitive polynomial of degree $8$ over $\Z_2 = GF(2)$ would be useful here; lists of these are known and one such is  $x^8+x^4+x^3+x^2+1 $. By taking unit-derived codes from this Fourier matrix one readily gets $(255, 253,3),(255,251,5), ..., (255, 239,17), ...,(255,223, 23), ..., (255,155,101),...$ codes. So for example the code $(255, 155, 101)$ can correct $50$ errors. Practical error-correcting algorithms for these are given within the general form of 
  Section \ref{general}.  

A better way perhaps of constructing these types of codes is to consider
      the prime $257$ and then the field $GF(257)$. The order of the
      units of $GF(257)$ is $256$ and then construct the Fourier $256\ti
      256$ matrix over $GF(257)$ using a primitive $256^{th}$ root of
      unity. Now the order of $3 \mod 257$ is $256$ so indeed $3\mod
      257$ could be used as
      this primitive root of unity in forming the Fourier $256\ti 256$
      matrix over $GF(257)$.  Other primitive
      generators could be used such as $5$ as the order of $5\mod 257$
      is also $256$.   Note here also that the arithmetic is
      modular arithmetic in $\Z_{257}=GF(257)$. For example codes of form
      $(256,222, 35)$ with efficient decoding algorithm which can
      correct $17$ errors may be formed over $GF(257)$; indeed codes
      of the form $(256,r,257-r)$ may be formed over $GF(257)$ 
with efficient decoding algorithms for $1\leq r \leq n$. 

\item Clearly also one can go much further and work with $GF(2^s)$ for $s> 8$.

\end{enumerate}


\subsection{Using special fields}\label{further} 


Suppose we require that the Fourier matrix, from which the unit-derived codes are generated,  be of size $p\ti p$ for a prime $p$. 

\subsubsection{Mersenne and repunit primes } 
Fields of characteristic $2$ were considered in section \ref{char2}. 

Suppose  the generator of the non-zero elements of 
$GF(2^s)$ is  of  order a  prime $p$ and form  
the Fourier $p\ti p$ matrix using this
generator as the $p^{th}$ root of unity. This gives a $p\ti p$ matrix 
over $GF(2^s)$ from which unit-derived mds codes may be generated; these have nice properties.  For example when rows are selected in arithmetic sequence $k$ then always $\gcd(n,k)=1$ and the resulting codes have  efficient decoding algorithms. 

Saying the non-zero elements of $GF(2^s)$ have order a prime is simply saying  that $2^s-1$ is a Mersenne
prime. The first Mersenne  primes are $3,7,31, 127, .., $ but it is unknown if there are an infinite number of these.

The fields $GF(2^5)$, $GF(2^7)$ with  
$2^5-1=31$ and $2^7-1=127 $ were given as examples in Section \ref{char2}. 

All these have efficient error-correcting
algorithms as explained in Section \ref{decoding}. 

One can also consider {\em repunit base $p$ primes}. Now $q$ is a
repunit base $p$ prime if $q$ is a prime and $p^s-1=(p-1)q$ for some
$s$. Repunit base $2$ primes are the Mersenne primes. Using repunit base
$p$ prime $q$ with $p^s-1=(p-1)q$ leads to considering $q\ti q$ Fourier
matrices over $GF(p^s)$. Details are omitted. 

\subsubsection{Germain primes}
It is often useful to have a prime size Fourier matrix in as small a field as 
possible. If this field is also a prime field, then this is even better as the calculations are  then modular arithmetic over the prime field.  
Thus we are lead to consider 
    {Germain primes}.  Now $p$ is a {\em Germain prime} if $2p+1$ is also a prime. 

Consider the field $GF(2p+1)$ where $p$ is also a prime. 
A generator $\om $ of the non-zero elements of $GF(2p+1)$ has order
$2p$ and thus $\al=\om^2$ has order $p$.  Now form the Fourier $p\ti p$
matrix over $GF(2p+1)$ using $\al$ as a primitive $p^{th}$ root of
unity. Codes are then formed from the rows of this
Fourier matrix and these are mds codes with efficient decoding
algorithms. As the codes are over $GF(2p+1)$ the arithmetic is modular
arithmetic over $\Z_{2p+1}$. 

The first Germain primes are $2,3,5,11,23, 29, 41, ...$. 

For example $p= 29$ gives $2*p+1=59$ and form a Fourier $29\ti 29$
matrix over $GF(59)$ using the square of any generator of the non-zero
elements of $GF(59)$. The order of $2\mod 59$ is $58$ so the order $4 \mod
59$ is $29$;  however the order of $3 \mod 59$ is also $29$ and this is preferable. Thus take $\om =
3 \mod 59$ and form the Fourier $29\ti 29$ matrix over $GF(59)$ using
this $\om$ as the primitive $29^{th}$ root of 1. 

\section{General enabling results}\label{general} 
In \cite{hurleycomp} conditions are given to ensure that subdeterminants
of Vandermonde matrices are non-zero. Fourier matrices are special types
of Vandermonde matrices. Such conditions can  be applied to  generate   
codes from units with maximum possible distance and further it is  shown that
practical decoding algorithms for these codes exist.  

Of particular relevance  in \cite{hurleycomp} is Section 6, noting  Proposition 6.1 and its corollaries. 

\subsection{Determinants of submatrices}
The Vandermonde matrix $V=V(x_1,x_2,\ldots,x_n)$ is defined by
   
$V=V(x_1,x_2,\ldots,x_n) = \begin{pmatrix}
1&1&\ldots &1 \\ x_1& x_2& \ldots& x_n \\ \vdots & \vdots &
\vdots & \vdots \\ x_1^{n-1} & x_2^{n-1} & \ldots &
x_n^{n-1} \end{pmatrix}$ 

It is assumed that entries of a Vandermonde matrix here are over a 
field and not necessarily over the real or complex numbers.  
It is well-known that the determinant of $V$ is non-zero if and only
if the $x_i$ are distinct; in fact $\det V = \prod_{i<
j}(x_i-x_j)$. 

Assume in addition from now on that {\em all entries of a Vandermonde
matrix used here are non-zero}.  

The following Proposition and its corollaries are taken from
\cite{hurleycomp}. The proofs are
included again here for completeness and for their importance. 
\begin{proposition}\label{van1} Let $V=V(x_1,x_2,\ldots, x_n)$ be a Vandermonde
 matrix with  rows and columns numbered $\{0, 1, \ldots, n-1\}$. 
 Suppose rows $\{i_1,i_2,\ldots,i_s\}$ (in order) and columns
 $\{j_1,j_2,\ldots, j_s\}$ are chosen to form an $s\times s$ submatrix $S$
 of $V$ and that $\{i_1,i_2, \ldots, i_s\}$ are in arithmetic
 progression with arithmetic difference $k$. Then

$$|S|= x_{k_1}^{i_1}x_{k_2}^{i_1} \ldots
 x_{k_s}^{i_1}|V(x_{k_1}^k,x_{k_2}^k, \ldots, x_{k_s}^k)|$$
\end{proposition}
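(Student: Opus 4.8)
The plan is a direct determinant computation that exploits the arithmetic-progression structure of the chosen rows. First I would write the submatrix $S$ explicitly. With the convention that the entry of $V$ in row $i$ and in the column carrying the variable $x_{k_l}$ equals $x_{k_l}^{i}$, the submatrix formed from rows $i_1<i_2<\cdots<i_s$ and from the columns holding $x_{k_1},\ldots,x_{k_s}$ has $(m,l)$-entry $x_{k_l}^{i_m}$, so
\[
S=\begin{pmatrix} x_{k_1}^{i_1} & \cdots & x_{k_s}^{i_1} \\ \vdots & & \vdots \\ x_{k_1}^{i_s} & \cdots & x_{k_s}^{i_s} \end{pmatrix}.
\]

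Next I would feed in the hypothesis that the rows are in arithmetic progression, i.e. $i_m=i_1+(m-1)k$ for $m=1,\ldots,s$, so that $i_1$ is the least exponent occurring. Using multilinearity of the determinant in the columns, I factor the common power $x_{k_l}^{i_1}$ out of column $l$ (legitimate because $i_1\le i_m$ for all $m$, so every entry of that column carries the factor and no negative exponents arise). This pulls the scalar $\prod_{l=1}^{s}x_{k_l}^{i_1}=x_{k_1}^{i_1}x_{k_2}^{i_1}\cdots x_{k_s}^{i_1}$ out front and replaces the $(m,l)$-entry by $x_{k_l}^{\,i_m-i_1}=x_{k_l}^{(m-1)k}=(x_{k_l}^{k})^{m-1}$.

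The decisive observation is that the reduced matrix, whose $(m,l)$-entry is $(x_{k_l}^{k})^{m-1}$ with $m-1$ running over $0,1,\ldots,s-1$, is exactly the Vandermonde matrix $V(x_{k_1}^{k},x_{k_2}^{k},\ldots,x_{k_s}^{k})$ in the new variables $x_{k_l}^{k}$. Combining the two steps yields $|S|=x_{k_1}^{i_1}x_{k_2}^{i_1}\cdots x_{k_s}^{i_1}\,|V(x_{k_1}^{k},\ldots,x_{k_s}^{k})|$, as asserted.

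There is no genuine obstacle here; the entire content is the recognition that the exponent differences $i_m-i_1=(m-1)k$ depend only on the row position $m$ and not on the column, which is precisely what turns the factored array back into a Vandermonde matrix, and this is the one and only place where the arithmetic-progression hypothesis is used. The only care required is the bookkeeping with the two index families (row exponents $i_m$ and column variables $x_{k_l}$) and checking that the factored-out power really is the minimal one in each column so that the manipulation stays inside the field's nonnegative powers.
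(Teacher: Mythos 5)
Your proof is correct and follows essentially the same route as the paper: write out $S$ with $(m,l)$-entry $x_{k_l}^{i_m}$, factor the common power $x_{k_l}^{i_1}$ out of each column, and recognize the reduced matrix with entries $(x_{k_l}^{k})^{m-1}$ as the Vandermonde matrix $V(x_{k_1}^{k},\ldots,x_{k_s}^{k})$. Nothing is missing; your explicit remark that the arithmetic-progression hypothesis enters exactly once (in making $i_m-i_1=(m-1)k$ independent of the column) is a point the paper leaves implicit.
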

\begin{proof} Note that $i_{l+1}-i_l= k$ for $l=1,2,\ldots, s-1$, for
 $k$ the fixed arithmetic difference.

Now $ S=\begin{pmatrix}x_{k_1}^{i_1} & x_{k_2}^{i_1} & \ldots &
       x_{k_s}^{i_1} \\ x_{k_1}^{i_2} & x_{k_2}^{i_2}&\ldots
       & x_{k_s}^{i_2}
\\ \vdots & \vdots & \vdots & \vdots \\ x_{k_1}^{i_s} & x_{k_2}^{i_s}&
       \ldots & x_{k_s}^{i_s} \end{pmatrix} $
 and so $|S| = \left|\begin{array}{cccc}x_{k_1}^{i_1} & x_{k_2}^{i_1} & \ldots &
       x_{k_s}^{i_1} \\ x_{k_1}^{i_2} & x_{k_2}^{i_2}&\ldots
       & x_{k_s}^{i_2}
\\ \vdots & \vdots & \vdots & \vdots \\ x_{k_1}^{i_s} & x_{k_2}^{i_s}&
       \ldots & x_{k_s}^{i_s} \end{array}\right| $.

Hence by factoring out $x_{k_i}$ from column $i$ for $i=1,2,\ldots, s$
it follows that  

$|S| = x_{k_1}^{i_1}x_{k_2}^{i_1}\ldots x_{k_s}^{i_1} \left|
 \begin{array}{cccc}1&1&\ldots &1 \\ x_{k_1}^k & x_{k_2}^k & \ldots & x_{k_s}^k
\\ x_{k_1}^{2k} & x_{k_2}^{2k} & \ldots &x_{k_s}^{2k} \\ \vdots &
  \vdots & \vdots & \vdots \\ x_{k_1}^{(s-1)k} & x_{k_2}^{(s-1)k} &
  \ldots & x_{k_s}^{(s-1)k}\end{array}\right|
= x_{k_1}^{i_1}x_{k_2}^{i_2}\ldots x_{k_s}^{i_s}|V(x_{k_1}^k,x_{k_2}^k,
 \ldots, x_{k_s}^k)|$

\end{proof}

A similar result holds when the columns $\{j_1,j_2,\ldots,
j_s\}$ are in arithmetic progression. 
\begin{corollary} $|S|\neq 0$ if and only if $|V(x_{k_1}^k,x_{k_2}^k,
 \ldots, x_{k_s}^k)|\neq 0 $. 
\end{corollary}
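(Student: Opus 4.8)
The plan is to read the equivalence off directly from Proposition \ref{van1}, using nothing beyond the standing assumption that every entry of the Vandermonde matrix is non-zero. Proposition \ref{van1} already expresses $|S|$ as a scalar multiple of the smaller Vandermonde determinant, namely $|S| = c\,|V(x_{k_1}^k, x_{k_2}^k, \ldots, x_{k_s}^k)|$ with prefactor $c = x_{k_1}^{i_1}x_{k_2}^{i_1}\cdots x_{k_s}^{i_1}$. Thus the entire task reduces to showing that $c$ is non-zero, after which the biconditional follows from the absence of zero divisors.

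First I would observe that each $x_{k_l}$ is non-zero by hypothesis, so $c$ is a product of non-zero elements of the ground field $\F$. Since a field is an integral domain, each factor is invertible and hence the product $c$ is itself non-zero (equivalently, $c$ is a unit of $\F$). I would then note that multiplication by a non-zero element of a field preserves vanishing and non-vanishing: from $|S| = c\,|V(x_{k_1}^k,\ldots,x_{k_s}^k)|$ with $c \neq 0$, we get $|S| = 0$ precisely when $|V(x_{k_1}^k,\ldots,x_{k_s}^k)| = 0$, and taking contrapositives yields the stated equivalence $|S|\neq 0 \iff |V(x_{k_1}^k,\ldots,x_{k_s}^k)|\neq 0$.

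There is no genuine obstacle here, as this is an immediate corollary of the factorization already established. The only point requiring care is that the argument must visibly invoke the non-zero entry assumption: were some $x_{k_l}$ permitted to be zero, the prefactor $c$ could vanish and the forward implication would fail. It is equally worth flagging that the conclusion relies on $\F$ being a field (or at least having no zero divisors), which is exactly what guarantees $c$ is not a zero divisor so that the two determinants vanish together.
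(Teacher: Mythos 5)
Your proof is correct and matches the paper's (implicit) argument: the corollary is stated there without proof precisely because it follows immediately from Proposition \ref{van1} together with the standing assumption that all entries are non-zero, which is exactly the deduction you spelled out. The only thing you've added is making explicit that the prefactor $x_{k_1}^{i_1}x_{k_2}^{i_1}\cdots x_{k_s}^{i_1}$ is a non-zero field element, which is the right point to flag.
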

\begin{corollary} $|S| \neq 0$ if and only if $x_{k_i}^k \neq x_{k_j}^k$
 for $i\neq j, 1\leq i,j \leq s$. This happens if and only if
 $(x_{k_i}{x_{k_j}^{-1}})^k \neq 1$ for $i\neq j, 1\leq i,j \leq s$. 
\end{corollary}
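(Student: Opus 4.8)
The plan is to chain the statement off the preceding corollary and the classical Vandermonde determinant formula, then perform a short algebraic rewrite. By the first corollary above we already know that $|S| \neq 0$ if and only if $|V(x_{k_1}^k, x_{k_2}^k, \ldots, x_{k_s}^k)| \neq 0$, so it suffices to characterize when this reduced Vandermonde determinant is nonzero and then re-express that condition in multiplicative form.

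First I would invoke the standing determinant identity recalled just before Proposition \ref{van1}, namely $\det V(y_1, \ldots, y_s) = \prod_{i<j}(y_i - y_j)$. Applying this with $y_i = x_{k_i}^k$ shows that $|V(x_{k_1}^k, \ldots, x_{k_s}^k)| = \prod_{i<j}(x_{k_i}^k - x_{k_j}^k)$, a product of field elements, which vanishes if and only if one of its factors vanishes. Hence this determinant is nonzero precisely when $x_{k_i}^k \neq x_{k_j}^k$ for all $i \neq j$. Combined with the first corollary, this yields the first claimed equivalence, $|S| \neq 0$ if and only if $x_{k_i}^k \neq x_{k_j}^k$ for $i \neq j$.

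For the second equivalence I would use that every $x_{k_i}$ is a nonzero element of the field, hence a unit with a well-defined inverse. Then $x_{k_i}^k = x_{k_j}^k$ holds exactly when $x_{k_i}^k (x_{k_j}^{-1})^k = 1$, that is, when $(x_{k_i} x_{k_j}^{-1})^k = 1$; negating both sides gives $x_{k_i}^k \neq x_{k_j}^k$ if and only if $(x_{k_i} x_{k_j}^{-1})^k \neq 1$. Stringing the two equivalences together completes the argument.

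There is no genuine obstacle here: the statement is a direct consequence of the first corollary, the Vandermonde product formula, and the invertibility of the nonzero $x_{k_i}$. The only point requiring a moment's care is to ensure the invertibility hypothesis is actually in force, which it is by the standing assumption that all entries of the Vandermonde matrices under consideration are nonzero; this is precisely what permits the passage from $x_{k_i}^k \neq x_{k_j}^k$ to the multiplicative condition $(x_{k_i} x_{k_j}^{-1})^k \neq 1$.
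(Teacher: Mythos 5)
Your proof is correct and follows exactly the route the paper intends: the preceding corollary reduces the claim to the non-vanishing of $|V(x_{k_1}^k,\ldots,x_{k_s}^k)|$, the classical product formula $\det V = \prod_{i<j}(x_i-x_j)$ (recalled in the paper just before Proposition \ref{van1}) gives the first equivalence, and the standing assumption that all entries are non-zero justifies the multiplicative rewriting $(x_{k_i}x_{k_j}^{-1})^k \neq 1$. The paper leaves this corollary unproved precisely because this chain of observations is immediate; your write-up fills it in faithfully.
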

\begin{corollary}\label{unity}  $|S|\neq 0$ if and
 only if $(x_{k_i}x_{k_j}^{-1})$ is not a $k^{th}$ root of unity for
 $i\neq j, 1\leq i,j\leq s$.
\end{corollary}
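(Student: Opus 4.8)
The plan is to obtain this directly from the preceding corollary, which already establishes that $|S|\neq 0$ if and only if $(x_{k_i}x_{k_j}^{-1})^k \neq 1$ for all $i\neq j$ with $1\leq i,j\leq s$. All that remains is to reinterpret this algebraic condition in the language of roots of unity.

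First I would recall the definition: an element $\zeta$ of the field is a $k^{th}$ root of unity precisely when $\zeta^k = 1$. Taking $\zeta = x_{k_i}x_{k_j}^{-1}$ --- which is a well-defined element of the field, since the standing assumption guarantees every $x_i$ is non-zero and hence invertible --- the inequality $(x_{k_i}x_{k_j}^{-1})^k \neq 1$ is logically equivalent to the assertion that $x_{k_i}x_{k_j}^{-1}$ is \emph{not} a $k^{th}$ root of unity.

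Then I would substitute this equivalence into the preceding corollary. The range of indices $i\neq j$, $1\leq i,j\leq s$ is carried over unchanged, so the chain of equivalences yields $|S|\neq 0$ if and only if $x_{k_i}x_{k_j}^{-1}$ is not a $k^{th}$ root of unity for every such pair $i,j$, which is exactly the claim.

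There is no genuine obstacle here: the corollary is a verbal recasting of a condition already proved, and the proof amounts to unwinding the definition of a $k^{th}$ root of unity. The only point meriting a glance is the existence of the inverse $x_{k_j}^{-1}$, and this is secured by the blanket hypothesis that all entries of the Vandermonde matrix are non-zero.
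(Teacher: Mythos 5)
Your proposal is correct and matches the paper exactly: the paper states this corollary without any written proof precisely because it is an immediate rephrasing of the preceding corollary, where the condition $(x_{k_i}x_{k_j}^{-1})^k \neq 1$ is simply translated into the language of $k^{th}$ roots of unity. Your added remark that $x_{k_j}^{-1}$ exists by the standing non-zero-entries assumption is a sensible (if minor) point of care that the paper leaves implicit.
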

\begin{corollary}\label{jump} When $k=1$ (that is when consecutive rows are taken) then $|S| \neq 0$.
\end{corollary}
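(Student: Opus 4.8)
The plan is to deduce this immediately from Corollary \ref{unity} by specialising to $k=1$. First I would observe that in any field the only $1^{st}$ root of unity is $1$ itself, since $z^1 = 1$ forces $z = 1$. Consequently the condition appearing in Corollary \ref{unity} --- that $(x_{k_i}x_{k_j}^{-1})$ fail to be a $k^{th}$ root of unity --- becomes, for $k=1$, simply the requirement that $x_{k_i}x_{k_j}^{-1} \neq 1$, i.e.\ that $x_{k_i} \neq x_{k_j}$ whenever $i \neq j$.

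Next I would verify that this distinctness holds automatically. The $s$ columns selected to form $S$ are distinct, so their indices are pairwise distinct, and since the standing assumption on a Vandermonde matrix is that its defining parameters $x_1, \ldots, x_n$ are distinct (indeed this is exactly the condition for $\det V \neq 0$), the entries $x_{k_1}, \ldots, x_{k_s}$ occupying the chosen columns are pairwise distinct. Hence $x_{k_i} \neq x_{k_j}$ for all $i \neq j$, the hypothesis of Corollary \ref{unity} is met, and therefore $|S| \neq 0$.

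There is essentially no computational obstacle here; the only point requiring care is the correct reading of ``$k^{th}$ root of unity'' at $k=1$, namely that the set of such roots collapses to $\{1\}$, so that the nonvanishing criterion reduces to the mere distinctness of the selected $x$-values --- a condition guaranteed by the hypotheses defining the Vandermonde matrix. In short, the corollary is the degenerate case of Corollary \ref{unity} in which the $k^{th}$-root-of-unity obstruction disappears entirely.
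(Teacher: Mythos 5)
Your proposal is correct and matches the paper's own proof, which likewise deduces the result from Corollary \ref{unity} by noting that for $k=1$ the condition reduces to $x_{k_i}x_{k_j}^{-1} \neq 1$, guaranteed by the distinctness of the $x_i$. Your write-up simply spells out more explicitly the observation that the only first root of unity is $1$.
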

\begin{proof} This follows from Corollary \ref{unity} as $(x_{k_i}x_{k_j}^{-1})\neq 1$ for $i\neq j$.
\end{proof} 
\begin{corollary}\label{fouri} Let  $x_i=\om^{i-1}$ where $\om$ is  a primitive
 $n^{th}$ root of unity (that is, when $V$ is the Fourier
  $n\times n$ matrix) and suppose $\gcd(k,n)=1$. Then $|S|\neq 0$. 
\end{corollary}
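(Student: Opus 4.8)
The plan is to reduce the entire statement to Corollary \ref{unity}, which already characterizes non-vanishing of $|S|$ for rows taken in arithmetic progression with difference $k$: it suffices to verify that $(x_{k_i}x_{k_j}^{-1})$ is \emph{not} a $k^{th}$ root of unity whenever $i\neq j$. Once the Fourier structure is substituted into this condition, the corollary collapses to a short number-theoretic check, so no new machinery is needed beyond what the preceding corollaries supply.

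First I would record that, because $V$ is the Fourier matrix, each value attached to a chosen column is a power of $\om$. Writing the chosen column values as $x_{k_i}=\om^{c_i}$, the exponents $c_i$ are distinct and lie in $\{0,1,\ldots,n-1\}$: distinctness follows since distinct columns of a Fourier $n\ti n$ matrix correspond to distinct powers of the primitive root, $\om$ having exact order $n$. For $i\neq j$ I would then compute $(x_{k_i}x_{k_j}^{-1})^k=\om^{k(c_i-c_j)}$ and ask when this equals $1$. Since $\om$ has order exactly $n$, this occurs precisely when $n\mid k(c_i-c_j)$.

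The key step, and the only place the hypothesis $\gcd(k,n)=1$ is used, is to cancel the factor $k$ modulo $n$: coprimality of $k$ and $n$ gives $n\mid k(c_i-c_j)$ if and only if $n\mid (c_i-c_j)$. Because the exponents are distinct and confined to $\{0,\ldots,n-1\}$, we have $0<|c_i-c_j|<n$, so $n$ cannot divide $c_i-c_j$. Hence $(x_{k_i}x_{k_j}^{-1})^k\neq 1$, i.e.\ $(x_{k_i}x_{k_j}^{-1})$ is not a $k^{th}$ root of unity for every $i\neq j$. Applying Corollary \ref{unity} then delivers $|S|\neq 0$.

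I do not anticipate a genuine obstacle here: the whole argument rests on cancelling $k$ modulo $n$ under the coprimality assumption, together with the order-$n$ property of a primitive root. The single point to state with care is the distinctness and range of the exponents $c_i$, which is what forces $0<|c_i-c_j|<n$ and thereby rules out $n\mid (c_i-c_j)$; beyond that, the proof is a direct substitution into Corollary \ref{unity}.
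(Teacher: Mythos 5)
Your proposal is correct and is essentially the paper's own argument: the paper likewise reduces to the root-of-unity criterion of Corollary \ref{unity}, writes $(x_{k_i}x_{k_j}^{-1})^k=\om^{k(k_i-k_j)}$, uses primitivity of $\om$ to get $k(k_i-k_j)\equiv 0 \mod n$, cancels $k$ via $\gcd(k,n)=1$, and concludes $k_i=k_j$ from the range of the indices. The only cosmetic difference is that the paper argues by contradiction while you argue the contrapositive directly.
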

\begin{proof} If $(x_{k_1}x_{k_j}^{-1})^k= 1$ then
  $(\om^{k_i-1}\om^{1-k_j})^k=1$ and so $\om^{k(k_i-k_j)} = 1$. 
  As $\om$ is a primitive $n^{th}$ root of unity this implies 
    that $k(k_i-k_j) \equiv 0 \mod n$. As $\gcd(k,n)=1$ this implies
  $k_i-k_j \equiv 0 \mod n$ in which case $k_i=k_j$ as $1\leq k_i<n,
  1\leq k_j < n$. 
\end{proof}

  
Recall that an mds code is one of the form $(n,r,n-r+1)$ which attains the maximum distance possible for an $(n,r)$ code.  mds codes with efficient decoding algorithm are  the goal. 

An mds $(n,r)$ code $\mathcal{C}$ is characterised by either of the
following equivalent conditions, \cite{blahut}:
\begin{itemize}
\item $\mathcal{C}$ is an $(n,r,n-r+1)$ code. 
\item $\mathcal{C^\perp}$ is an mds $(n,n-r, r+1)$ code, where $\mathcal{C^\perp}$ is the dual code of $\mathcal{C}$. 
\item Any $(n-r)$ columns of a check matrix for  $\mathcal{C}$ are linearly independent.
\item  Any $r$ columns  of a generator matrix for $\mathcal{C}$ are linearly independent.  

\end{itemize} 


As long as we take the rows of the $n\ti n$ Vandermonde 
matrix in arithmetic sequence $k$ and the entries
$x_i$ are such that $(x_ix_j^{-1})$ is not a $k^{th}$ root of unity for
$i\neq j$ then mds codes will be generated by these rows. 
When $k=1$,  in which case consecutive rows of the matrix are taken, then always $\gcd(n,k)=1$.  When the Vandermonde matrix in question is the Fourier matrix  in addition it will be shown that practical decoding algorithms exist for these cases. 



\subsection{Fourier matrix} The Fourier matrix is a special type of
Vandermonde matrix. Let $\om$ be a primitive $n^{th}$ root of unity in a
field $\F$. The Fourier matrix $F_n$, relative to $\om$ and $\F$, is the $n\ti n$ matrix  

$$ F_n= \begin{pmatrix}1 & 1 & 1& \ldots & 1 \\ 1 & \om & \om^2 & \ldots &
	 \om^{n-1} \\ 
1 & \om^2 & \om^4 & \ldots & \om^{2(n-1)} \\ \vdots & \vdots & \vdots &
    \ldots & \vdots \\ 1 & \om^{n-1} & \om^{2(n-1)} & \ldots &
    \om^{(n-1)(n-1)} \end{pmatrix}$$

Simplifications can be made to the powers by noting $\om^n=1$. 

Then  $$\begin{pmatrix}1 & 1 & 1& \ldots & 1 \\ 1 & \om & \om^2 & \ldots & 
	 \om^{n-1} \\ 
1 & \om^2 & \om^4 & \ldots & \om^{2(n-1)} \\ \vdots & \vdots & \vdots &
    \ldots & \vdots \\ 1 & \om^{n-1} & \om^{2(n-1)} & \ldots &
    \om^{(n-1)(n-1)} \end{pmatrix} \begin{pmatrix}1 & 1 & 1& \ldots & 1 \\ 1
				    & \om^{n-1} & \om^{2(n-1)} & \ldots
				    & \om^{(n-1)(n-1)}\\ 
1 & \om^{n-2} & \om^{2(n-2)} & \ldots &\om^{(n-1)(n-2)} \\ \vdots & \vdots & \vdots &
    \ldots & \vdots \\ 1 & \om & \om^{2} & \ldots &
    \om^{(n-1)} \end{pmatrix} = nI_n$$

The inverse of $F_n$ can be obtained from the above by multiplying
through by $n^{-1}$ when it exists. An $n^{th}$ root of unity can only
exist in a field provided the characteristic of the field does not
divide $n$ and in this case the $n^{-1}$ exists.


If $\om$ is a primitive $n^{th}$ root of unity then so is $\om^k$
where $\gcd(n,k) = 1$ and in these cases the Fourier matrix may be
defined by replacing $\om$ by $\om^k$ to obtain another Fourier
matrix. Notice that the second matrix on the left in the above is
obtained by replacing $\om$ by $\om^{n-1}$ and is thus also a Fourier
matrix (relative to $\om^{n-1}$ and $\gcd(n,n-1) = 1$). 

Denote the  rows of $F_n$ in order by $\{E_0, E_1, \ldots,
E_{n-1}\}$. It is easily checked that $E_iE_{n-i}\T=n$ and $E_iE_j\T = 0$
for $j\neq n-i \mod n$. Thus 

$$\begin{pmatrix}E_0 \\ E_1 \\ \vdots \\ E_{n-1}\end{pmatrix} (E_0\T,
E_{n-1}\T, E_{n-2}\T, \ldots, E_1\T) = nI_n$$

Call this the {\em Fourier Equation} for future reference. We are
assuming the Fourier matrix exists over the field and in particular 
 any $r$ rows or
any $r$ columns are linearly independent. 

Suppose then the first $r$ rows of $F_n$ are used to form a generating
matrix $A$ for a $(n,r)$ code $\mathcal{C}_r$. Now using the unit-derived scheme from the Fourier matrix we see that  

$$\begin{pmatrix}E_0 \\ E_1 \\ \vdots \\ E_{r-1}\end{pmatrix} 
(E_{n-r}\T, E_{n-2}\T, \ldots, E_1\T) = 0_{n-r}$$ 

which corresponds to $AD=0_{n-r}$ 
where $D\T$ is a  check matrix.   Thus a check matrix is 

$\begin{pmatrix}E_{n-r} \\ E_{n-r-1} \\ \vdots \\ E_1 \end{pmatrix} $
 and hence  
$\begin{pmatrix} E_1 \\  E_2 \\ \vdots \\ E_{{n-r}}\end{pmatrix}$ 
is a check matrix. 
 
Suppose a codeword $\al A$ is transmitted but $\al A + w$ with error $w$ is received where $w$ is an $1\ti n$ vector. Then  $<E_i,w> = \al_i$ are known for $i=1,2, \ldots (n-r)$ since  $(\al A + w)E_i\T= w E_i\T= <w,E_i>$ for these $i$.

The star 
multiplication, $*$ ,  is explained further 
in Section \ref{prelim1} but  is simply multiplying corresponding
entries of vectors: If $x_i$
denotes the $i^{th}$ component of a vector $\underline{x}$ in $\F^n$
then $\underline{a}*\underline{b}$ for $\underline{a},\underline{b}\in
\F^n$ is defined to be the vector with components $a_i*b_i$ in $i^{th}$ position. 
The rows of $F_n$ also have the nice property that $E_i*E_j=E_{i+j}$
where suffices are taken $\mod n$ and this is very useful for describing
error-correcting algorithms. 


\subsection{Consecutive rows}
First of all consider cases where consecutive 
 rows of the Vandermonde matrix are taken to define a unit-derived code.

The Vandermonde matrix is  

$V=V(x_1,x_2,\ldots,x_n)  = \begin{pmatrix}
1&1&\ldots &1 \\ x_1& x_2& \ldots& x_n \\ \vdots & \vdots &
\vdots & \vdots \\ x_1^{n-1} & x_2^{n-1} & \ldots &
x_n^{n-1} \end{pmatrix}$ 
 
This has inverse $U$ with $VU=I_n$. When $V$ is a Fourier matrix the
inverse matrix $U$ of $V$ is easy to find and can be written down directly. 

Let $A$ be the matrix of the first $r$ rows of $V$ and
 $D$ the matrix of the last $(n-r)$ columns of $U$. By unit-derived
 scheme then, $AD=0$ and 
 $D\T$ is the check matrix of the $(n,r)$ code $\mathcal{C}$ generated by
  $A$. Now $\mathcal{C}^\perp$ is the dual code of
 $\mathcal{C}$ and is generated by the rows of $D\T$. It is known that 
 $\mathcal{C}$ is
 an mds code if and only if $\mathcal{C}^\perp$ is an mds code. 

\begin{proposition} Any $r\ti r$ submatrix of $A$ is a Vandermonde
 matrix $V(x_{i_1}, x_{i_2}, \ldots, x_{i_r})$ for $i_j \in \{1,2,\ldots,n \}$ with $i_1<i_2 < \ldots < i_r$. 
\end{proposition}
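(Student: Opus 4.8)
The plan is to recognise that this claim is a purely structural observation about how $A$ and its submatrices are formed, so the proof amounts to unwinding the definitions and checking that the entries line up. First I would record the shape of $A$: since $A$ consists of the first $r$ rows of $V=V(x_1,x_2,\ldots,x_n)$, it is an $r\ti n$ matrix whose entry in row $p$ and column $q$ is $x_q^{\,p-1}$, where I index rows by $p\in\{1,2,\ldots,r\}$ (so that row $p$ carries the exponent $p-1$) and columns by $q\in\{1,2,\ldots,n\}$.

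Next I would exploit the fact that $A$ has exactly $r$ rows. Consequently every $r\ti r$ submatrix of $A$ must retain all of its rows, and the only freedom lies in the choice of $r$ of the $n$ columns. I would therefore fix column indices $i_1<i_2<\cdots<i_r$ drawn from $\{1,2,\ldots,n\}$ and let $S$ be the resulting submatrix. By construction the $(p,q)$ entry of $S$ is $x_{i_q}^{\,p-1}$ for $p,q\in\{1,2,\ldots,r\}$.

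Finally I would compare this with the definition of the Vandermonde matrix recalled at the start of the section: $V(y_1,y_2,\ldots,y_r)$ is the matrix whose $(p,q)$ entry is $y_q^{\,p-1}$. Substituting $y_q=x_{i_q}$ then yields exactly $S=V(x_{i_1},x_{i_2},\ldots,x_{i_r})$ with $i_1<i_2<\cdots<i_r$, which is the assertion.

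The only point that needs any care is the bookkeeping of index conventions: that the row labelled $p$ in $A$ corresponds to the exponent $p-1$, and that the selected columns are taken in increasing order of their index, so that $S$ is literally in Vandermonde form rather than a column permutation of one. There is no genuine obstacle here; the content of the statement is structural, and its purpose is to set up the subsequent appeal to the nonvanishing of Vandermonde determinants (Corollary \ref{jump}) that will establish the mds property of $\mathcal{C}$.
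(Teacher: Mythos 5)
Your proof is correct. It differs from the paper's only in route: the paper disposes of this proposition with a one-line appeal to Proposition \ref{van1} (the determinant identity for row selections in arithmetic progression, specialized here to difference $k=1$ with starting exponent $0$, where the extracted column factors $x_{k_i}^{i_1}$ all equal $1$), whereas you unwind the definitions directly. Your version is arguably better matched to the literal claim: Proposition \ref{van1} as stated is an identity between determinants, so strictly it yields $|S|=|V(x_{i_1},\ldots,x_{i_r})|$ rather than the stronger structural assertion that $S$ \emph{is} the matrix $V(x_{i_1},\ldots,x_{i_r})$; only the proof of Proposition \ref{van1} (factoring entries out of each column) exhibits the submatrix as literally Vandermonde, and in the present case those factors are $1$ so no factoring is needed at all. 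Your key observation --- that $A$ has exactly $r$ rows, so an $r\times r$ submatrix involves no choice of rows and only a choice of columns, taken in their original order --- is precisely the reduction the paper leaves implicit. Either way, what is used downstream (nonvanishing of the determinant, hence the mds property) follows equally from both arguments.
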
 

\begin{proof} This follows from Proposition \ref{van1} above. 


\end{proof}
\begin{corollary} Any $r\ti r$ submatrix of $A$ has $\det \neq 0$. 
\end{corollary}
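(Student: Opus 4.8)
The plan is to reduce the statement directly to the Proposition immediately preceding it, together with the classical Vandermonde determinant formula quoted earlier, namely $\det V(x_1,\ldots,x_n) = \prod_{i<j}(x_i - x_j)$. First I would fix an arbitrary $r \ti r$ submatrix $S$ of $A$. Since $A$ consists of rows $0,1,\ldots,r-1$ of $V$ across all $n$ columns, such a submatrix is determined by a choice of $r$ columns $i_1 < i_2 < \cdots < i_r$ from $\{1,2,\ldots,n\}$. By the Proposition above, $S$ is then exactly the Vandermonde matrix $V(x_{i_1}, x_{i_2}, \ldots, x_{i_r})$ built on the corresponding sub-collection of the $x_i$.

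Next I would apply the determinant formula to this smaller Vandermonde matrix, giving $\det S = \prod_{k<l}(x_{i_k} - x_{i_l})$. The only remaining point is to check that this product is nonzero, and this is immediate from the standing hypothesis that $x_1, \ldots, x_n$ are distinct: distinctness is inherited by any sub-collection, so $x_{i_k} \neq x_{i_l}$ whenever $k \neq l$, whence every factor $x_{i_k} - x_{i_l}$ is a nonzero element of the field and the product does not vanish.

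There is no genuine obstacle here; the mathematical content is carried entirely by the preceding Proposition, and the corollary is essentially the observation that distinctness of $x_1, \ldots, x_n$ survives passage to subsets. For completeness I note that the same conclusion also follows directly from Corollary \ref{jump}: the rows $0,1,\ldots,r-1$ of $V$ form an arithmetic progression with common difference $k=1$, so that corollary applies verbatim to $S$ and yields $|S| \neq 0$ without recomputing the Vandermonde product.
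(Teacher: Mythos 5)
Your proof is correct and is exactly the argument the paper intends (the paper leaves this corollary without an explicit proof): identify any $r\ti r$ submatrix of $A$ as the Vandermonde matrix $V(x_{i_1},\ldots,x_{i_r})$ via the preceding Proposition, then invoke the product formula $\prod_{k<l}(x_{i_k}-x_{i_l})$ and the standing distinctness assumption on the $x_i$. Your closing remark that Corollary \ref{jump} (the $k=1$ case) gives the same conclusion is also accurate, since $A$ consists of consecutive rows.
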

\begin{corollary} The code $\mathcal{C}^\perp$ is an mds code. 
\end{corollary}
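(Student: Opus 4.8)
The plan is to reduce the claim to the column-independence characterisation of mds codes already recorded in the itemised list above, apply it to the generator matrix $A$ of $\mathcal{C}$, and then pass to the dual via the self-duality of the mds property (noted in the same list and in the discussion preceding the proposition).

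First I would recall the set-up: $A$ is the $r\ti n$ matrix consisting of the first $r$ rows of $V$, and it generates $\mathcal{C}$. Selecting any $r$ of its $n$ columns produces an $r\ti r$ submatrix of $A$, and by the immediately preceding Corollary every such submatrix has nonzero determinant. Hence any $r$ columns of the generator matrix $A$ are linearly independent. Next I would invoke the mds characterisation stating that an $(n,r)$ code is mds precisely when any $r$ columns of a generator matrix are linearly independent; applying this to $A$ shows that $\mathcal{C}$ itself is an $(n,r,n-r+1)$ mds code. Finally, using the equivalence that $\mathcal{C}$ is mds if and only if $\mathcal{C}^\perp$ is mds, I conclude that $\mathcal{C}^\perp$ is an mds code as claimed.

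There is no substantive obstacle here, since all the analytic work has already been done in Proposition \ref{van1} and its corollaries, which guarantee the nonvanishing of the relevant subdeterminants. The only point requiring care is the bookkeeping that identifies \emph{an $r\ti r$ submatrix of $A$} with \emph{a choice of $r$ columns of the generator matrix}, so that the correct one of the four equivalent mds conditions is applied; once that identification is made, the corollary follows in a single line, and the detour through $\mathcal{C}$ and its dual is what avoids having to analyse submatrices of the less tractable matrix $D\T$ directly.
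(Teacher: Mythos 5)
Your proof is correct, but it runs in the opposite direction to the paper's. The paper treats $A$ as a \emph{check} matrix for $\mathcal{C}^\perp$: since $AD=0$ and $\mathcal{C}^\perp$ is the $(n,n-r)$ code generated by the rows of $D\T$, the fact that every $r\ti r$ submatrix of $A$ is nonsingular says that any $r$ columns of this check matrix are linearly independent, which is precisely the check-matrix characterisation of mds for an $(n,n-r)$ code; the distance $r+1$ of $\mathcal{C}^\perp$ drops out in one step, with no mention of $\mathcal{C}$ at all. You instead treat $A$ as the \emph{generator} matrix of $\mathcal{C}$, apply the generator-matrix characterisation to conclude $\mathcal{C}$ is mds, and then transfer to $\mathcal{C}^\perp$ via the duality equivalence. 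Both arguments rest on exactly the same analytic input (the nonvanishing subdeterminants from Proposition~\ref{van1} and its corollaries), so the difference is one of bookkeeping: your route needs the mds-duality equivalence as an extra ingredient but proves the neighbouring corollary (that $\mathcal{C}$ is mds) along the way, whereas the paper proves $\mathcal{C}^\perp$ mds first and then deduces $\mathcal{C}$ mds from it --- so your ordering is the mirror image of the paper's. There is no circularity in your version, since you establish that $\mathcal{C}$ is mds directly from the subdeterminant fact rather than by citing the paper's subsequent corollary; and your closing remark is slightly off only in that the paper, too, never analyses submatrices of $D\T$ --- it avoids them by reinterpreting $A$ as a check matrix rather than by detouring through the dual.
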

\begin{proof} This is true since $A$ is the check matrix of
 $\mathcal{C}^\perp$ and every $r\ti r$ submatrix of $A$ has non-zero
 determinant so that the minimum distance of the $(n,n-r)$ code
 $\mathcal{C}^\perp$ is $r+1$.
\end{proof}  
\begin{corollary} The code $\mathcal{C}$ is an $(n,r,n-r+1)$ mds code. 
\end{corollary}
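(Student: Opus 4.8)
The plan is to read this corollary straight off the immediately preceding one, using the self-duality of the mds property. Recall the equivalence recorded in the characterisation of mds codes: an $(n,r)$ code is an $(n,r,n-r+1)$ mds code if and only if its dual is an mds $(n,n-r,r+1)$ code. The previous corollary has just shown that $\mathcal{C}^\perp$ is mds, so applying this equivalence in the reverse direction hands us the statement for $\mathcal{C}$.

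First I would confirm that $\mathcal{C}$ really is an $(n,r)$ code, i.e. that its generator matrix $A$ has full row rank $r$. This is immediate from the corollary asserting that every $r\times r$ submatrix of $A$ has nonzero determinant: such a submatrix is invertible, so the $r$ rows of $A$ are linearly independent and $\dim\mathcal{C}=r$. With the dimension fixed, the distance label supplied by the duality equivalence is unambiguous. I would then invoke the equivalence directly: since $\mathcal{C}^\perp$ is mds of type $(n,n-r,r+1)$, its dual $\mathcal{C}^{\perp\perp}=\mathcal{C}$ is mds of type $(n,r,n-r+1)$; equivalently, the distance $r+1$ for $\mathcal{C}^\perp$ forces distance $n-r+1$ for $\mathcal{C}$.

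There is no genuine obstacle here: the entire argument is duality bookkeeping, and all the real work — showing the relevant Vandermonde subdeterminants are nonzero, so that $\mathcal{C}^\perp$ is mds — was already carried out in Proposition \ref{van1} and the corollaries following it. The one point worth stating cleanly rather than proving is the consistency of the role assignments: $A$ generates $\mathcal{C}$ and simultaneously serves as a check matrix for $\mathcal{C}^\perp$, which is precisely why the minimum-distance count $r+1$ for $\mathcal{C}^\perp$ transfers across the duality to give distance $n-r+1$ for $\mathcal{C}$.
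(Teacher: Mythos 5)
Your proof is correct and takes essentially the same approach as the paper: the paper's proof likewise deduces the statement from the preceding corollary that $\mathcal{C}^\perp$ is an mds $(n,n-r,r+1)$ code, via the duality characterisation of mds codes. (The paper also records, as an alternative, the direct argument that any $r$ columns of $A$ are linearly independent because every $r\times r$ subdeterminant is nonzero --- a fact you invoke only to confirm $\dim\mathcal{C}=r$.)
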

\begin{proof} This is because $\mathcal{C}^\perp$ is an mds
 $(n,n-r,r+1)$ code. It may also be seen from the fact that any $r$
 columns of $A$ are linearly independent since the determinant of any
 $r\ti r$ submatrix of $A$ is $\neq 0$. 
\end{proof}

\quad 

Take any  $r$ consecutive rows of a Vandermonde matrix as follows:
 
$A=\begin{pmatrix} x_1^{r_1} & x_2^{r_1} & \ldots & x_n^{r_1} 
\\ x_1^{r_1+1}& x_2^{r_1+1}& \ldots & x_n^{r_1+1} \\ \vdots & \vdots &
\vdots & \vdots \\ x_1^{r_1+ r-1} & x_2^{r_1+r-1} & \ldots &
x_n^{r_1+r-1} \end{pmatrix}$ 

Write $i_j$ for $x_{i_j}$. Now  any $r\ti r$ submatrix of $A$ has the form

$\begin{pmatrix}
 i_1^{r_1} & i_2^{r_1} & \ldots & i_r^{r_1} \\ \vdots & \vdots &
\vdots & \vdots \\ i_1^{r_1+r-1} & i_2^{r_1+r-1} & \ldots &
i_r^{r_1+r-1} \end{pmatrix}$.

The determinant of this is by Proposition \ref{van1}

$i_1^{r_1}i_2^{r_1}\ldots i_r^{r_1} \left|\begin{matrix}
1&1&\ldots &1 \\ i_1 & i_2 & \ldots & i_r \\ \vdots & \vdots &
\vdots & \vdots \\ i_1^{r-1} & i_2^{r-1} & \ldots &
i_r^{r-1} \end{matrix}\right| = i_1^{r_1}i_2^{r_1}\ldots i_r^{r_1} |V(i_1,i_2, \ldots, i_r)| $
 
This is clearly non-zero - we are assuming the $x_j$ are distinct and non-zero. 

This gives further  mds codes from the unit scheme. 

\begin{proposition} Let $\mathcal{C}_r$ be a code obtained by taking any $r$ rows in succession of a Vandermonde $n\ti n$ matrix as  a generator matrix. Then $\mathcal{C}_r$ is an mds $(n,r,n-r+1)$ code.
\end{proposition}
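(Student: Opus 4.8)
The plan is to verify the mds property directly through the column-independence characterization listed above: a code with generator matrix $A$ is mds $(n,r,n-r+1)$ precisely when every choice of $r$ columns of $A$ is linearly independent, equivalently when every $r\ti r$ submatrix of $A$ is nonsingular. So I would fix the generator matrix $A$ to consist of the rows $r_1, r_1+1, \ldots, r_1+r-1$ of the Vandermonde matrix $V=V(x_1,\ldots,x_n)$ (an arbitrary block of $r$ consecutive rows), and reduce the whole statement to a single uniform determinant computation.

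First I would select an arbitrary set of $r$ columns, indexed $i_1<i_2<\cdots<i_r$, and write down the resulting $r\ti r$ submatrix $S$ of $A$. Since the chosen rows $r_1,\ldots,r_1+r-1$ form an arithmetic progression with difference $k=1$, Proposition \ref{van1} applies and factors the determinant as $\det S = x_{i_1}^{r_1}x_{i_2}^{r_1}\cdots x_{i_r}^{r_1}\,|V(x_{i_1},x_{i_2},\ldots,x_{i_r})|$, which is exactly the computation already displayed just before the statement. The monomial prefactor arises from pulling $x_{i_j}^{r_1}$ out of the $j$th column, and the residual matrix is a genuine $r\ti r$ Vandermonde in the chosen variables.

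Then I would conclude: the prefactor is nonzero because the $x_j$ are assumed nonzero, and $|V(x_{i_1},\ldots,x_{i_r})| = \prod_{a<b}(x_{i_a}-x_{i_b}) \neq 0$ because the $x_j$ are distinct. Indeed this nonvanishing is precisely the $k=1$ instance recorded in Corollary \ref{jump}, so I could even bypass the explicit factorization and invoke that corollary directly to assert $\det S \neq 0$. Either way, $\det S \neq 0$ for every choice of $r$ columns, so any $r$ columns of $A$ are linearly independent and $\mathcal{C}_r$ is an mds $(n,r,n-r+1)$ code over $\F$.

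There is no genuine obstacle here; the result is a clean specialization of Proposition \ref{van1} with $k=1$, and the only points requiring care are bookkeeping ones. One must check the determinant condition uniformly over every $r$-subset of columns rather than for a single fixed submatrix, and one must note that for a general starting row $r_1 \neq 0$ the submatrix $S$ is \emph{not} itself a Vandermonde matrix but only a monomial rescaling of one --- which is exactly why Proposition \ref{van1} (rather than the plain Vandermonde determinant formula used for the first-$r$-rows case) is the appropriate tool.
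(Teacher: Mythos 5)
Your proposal is correct and is essentially the paper's own argument: the paper likewise takes an arbitrary block of $r$ consecutive rows, applies Proposition \ref{van1} (arithmetic difference $k=1$) to an arbitrary $r\times r$ submatrix to obtain the factorization $\det S = x_{i_1}^{r_1}x_{i_2}^{r_1}\cdots x_{i_r}^{r_1}\,|V(x_{i_1},x_{i_2},\ldots,x_{i_r})|$, notes this is nonzero because the $x_j$ are nonzero and distinct, and concludes the mds property from the linear independence of any $r$ columns of the generator matrix. Your remark that Corollary \ref{jump} could be cited directly is just the same fact in packaged form, so there is no substantive difference between your route and the paper's.
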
 

\subsection{Rows in arithmetic sequence}
Now  choose  $r$ rows in sequence with the same arithmetic difference
$p$. Consider the case where the sequence starts at the first row; cases
where the sequence begins  at another row are similar.  Then the matrix formed is  

$A=\begin{pmatrix}
1&1&\ldots &1 \\ x_1^p& x_2^p& \ldots& x_n^p \\ x_1^{2p}& x_2^{2p}& \ldots& x_n^{2p} \\ \vdots & \vdots &
\vdots & \vdots \\ x_1^{p(r-1)} & x_2^{p(r-1)} & \ldots &
x_n^{p(r-1)} \end{pmatrix}$ .

Here we begin at the first row and assume $p(r-1) \leq n$.   It may be possible to overlap and take $p*j $ to be $p*j \mod n$,  and the added assumption that $r
< n$. In particular  overlapping is possible when the Vandermonde unit schemes consist of Fourier matrices. 

The check matrix is obtained by deleting the corresponding columns of the inverse of $V$. 

Any $r\ti r$ submatrix of $A$ has the form 

$\begin{pmatrix}
1&1&\ldots &1 \\ i_1^p & i_2^p & \ldots & i_r^p \\ i_1^{2p} & i_2^{2p} & \ldots & i_r^{2p} \\ \vdots & \vdots &
\vdots & \vdots \\ i_1^{p(r-1)} & i_2^{p(r-1)} & \ldots &
i_r^{p(r-1)} \end{pmatrix}$

where $i_j^k$ means $x_{i_j}^k$. 
This has determinant $\prod_{k<j} (i_k^p - i_j^p)$. It is easy to decide  when this is non-zero. 

This determinant is non-zero if and only for all $i_k,i_j, k\neq j $ that $i_k^p - i_j^p \neq 0$ and this happens if and only if $(i_ki_j^{-1})^p \neq 1 $ which happens if and only if  $i_ki_j^{-1}$ is not a $p^{th}$ root of unity. 

From Corollary \ref{fouri} it is noted that when $\gcd(n,k) = 1$ and the Vandermonde matrix is a Fourier matrix then the determinant is never $0$. This gives the following proposition. 

\begin{proposition}\label{fourier} Let $F$ be a Fourier $n\ti n$ matrix. Suppose a code is obtained from $F$ by choosing in order $r$ rows which are in arithmetic sequence $k$ with $\gcd(n,k)=1$ to form the generator matrix of a code. Then the code is an mds $(n,r,n-r+1)$ code.
\end{proposition}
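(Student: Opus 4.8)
The plan is to verify the mds characterisation that any $r$ columns of the generator matrix $A$ are linearly independent; equivalently, that every $r\ti r$ submatrix of $A$ has non-zero determinant. Since $A$ consists of the $r$ chosen rows of $F$, taken in arithmetic sequence with common difference $k$, this reduces the entire proposition to the subdeterminant computation already carried out above, and the non-vanishing of subdeterminants is exactly what Corollary \ref{fouri} supplies.

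First I would fix an arbitrary set of $r$ columns $\{j_1,j_2,\ldots,j_r\}$ and let $S$ denote the resulting $r\ti r$ submatrix of $A$. By construction the rows of $S$ are indexed by the arithmetic progression of common difference $k$, so $S$ is precisely a submatrix of the type treated in Proposition \ref{van1}. Applying that proposition with $s=r$ expresses $|S|$ as a product of entries (each of which is non-zero by the standing assumption that all entries of the Vandermonde matrix are non-zero) times the Vandermonde determinant $|V(x_{j_1}^{k},x_{j_2}^{k},\ldots,x_{j_r}^{k})|$.

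Next I would invoke Corollary \ref{fouri}. Because $F$ is a Fourier matrix we have $x_i=\om^{\,i-1}$ for a primitive $n^{th}$ root of unity $\om$, and because $\gcd(n,k)=1$ the corollary yields $|S|\neq 0$ directly. Concretely, the displayed Vandermonde determinant vanishes only if $x_{j_a}^{k}=x_{j_b}^{k}$ for some $a\neq b$, that is $\om^{\,k(j_a-j_b)}=1$; since $\om$ has order $n$ and $\gcd(n,k)=1$ this forces $j_a\equiv j_b \pmod{n}$, contradicting the distinctness of the chosen columns. Hence every $r\ti r$ submatrix of $A$ is non-singular.

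Finally, as the choice of columns was arbitrary, any $r$ columns of $A$ are linearly independent, which is exactly the stated mds condition; this also shows $A$ has full row rank $r$, so it genuinely generates an $(n,r)$ code, and therefore $\mathcal{C}$ is an mds $(n,r,n-r+1)$ code. I do not expect a genuine obstacle, since Corollary \ref{fouri} already delivers the crucial non-vanishing. The only point requiring care is the bookkeeping that the rows of $S$ continue to form an arithmetic progression of difference $k$ (so that Proposition \ref{van1} applies) while the columns remain completely arbitrary; in the Fourier setting the permitted wraparound of row indices modulo $n$ does not disturb this, because $\gcd(n,k)=1$ keeps the $r$ row exponents distinct.
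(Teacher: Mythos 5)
Your proof is correct and is essentially the paper's own argument: the paper likewise reduces the mds property to showing every $r\times r$ submatrix of the generator matrix has non-zero determinant, using the subdeterminant factorisation of Proposition \ref{van1} and then Corollary \ref{fouri} with the hypothesis $\gcd(n,k)=1$. The only differences are cosmetic, namely that you make the column-independence characterisation and the modulo-$n$ wraparound bookkeeping slightly more explicit.
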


Note also for the Fourier matrix that it is possible to {\em overlap} in selection  and still obtain an mds code.  

\begin{proposition}\label{vanderm} Let $V=V(x_1,x_2,\ldots,x_n)$ be a Vandermonde  $n\ti n$ matrix such that $x_ix_j^{-1}$ is not a $k^{th}$ root of unity for any $i\neq j$. 
Suppose a code is obtained by choosing in order $r$ rows from $V$  
which are in arithmetic sequence $k$ to form a code. Then the code is an mds $(n,r, n-r+1)$ code.
\end{proposition}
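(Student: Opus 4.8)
The plan is to establish the fourth equivalent characterization of an mds code listed above, namely that any $r$ columns of a generator matrix are linearly independent. Since $\mathcal{C}$ is generated by the matrix $A$ formed from the chosen $r$ rows of $V$ in arithmetic sequence with difference $k$, it suffices to show that every $r\ti r$ submatrix of $A$ has nonzero determinant.

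First I would note that selecting any $r$ columns $\{k_1,\ldots,k_r\}$ of $A$ produces an $r\ti r$ submatrix $S$ whose rows are the chosen rows of $V$ restricted to these columns; in particular the row indices of $S$ still form an arithmetic progression with difference $k$. This is exactly the setting of Proposition \ref{van1} with $s=r$, which gives
$$|S| = x_{k_1}^{i_1}x_{k_2}^{i_1}\cdots x_{k_r}^{i_1}\,|V(x_{k_1}^k, x_{k_2}^k, \ldots, x_{k_r}^k)|,$$
where $i_1$ is the first index of the arithmetic sequence of rows.

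The next step is to argue this determinant is nonzero. The scalar prefactor $x_{k_1}^{i_1}\cdots x_{k_r}^{i_1}$ is a product of nonzero field elements --- by the standing assumption that all entries of $V$ are nonzero --- and hence is nonzero. By Corollary \ref{unity} the remaining factor $|V(x_{k_1}^k,\ldots,x_{k_r}^k)|$ is nonzero precisely when $(x_{k_i}x_{k_j}^{-1})$ is not a $k^{th}$ root of unity for $i\neq j$, which is the hypothesis of the proposition. Therefore $|S|\neq 0$ for every choice of $r$ columns, so any $r$ columns of $A$ are linearly independent, and by the mds characterization $\mathcal{C}$ is an $(n,r,n-r+1)$ mds code.

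I do not expect a genuine obstacle: the substantive content has already been carried by Proposition \ref{van1} and Corollary \ref{unity}, and this proposition is essentially their packaging for a general Vandermonde matrix, with Proposition \ref{fourier} recovered as the special Fourier case. The only points needing a little care are confirming that an arbitrary selection of columns inherits the arithmetic-progression structure of the rows (so that Proposition \ref{van1} applies with $s=r$), and noting that the nonvanishing of the scalar prefactor is immediate from the non-zero-entries assumption rather than requiring a separate argument.
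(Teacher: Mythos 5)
Your proof is correct and takes essentially the same route as the paper: both arguments reduce the mds property to showing every $r\times r$ minor of the generator matrix is nonzero, and both obtain this nonvanishing from the Vandermonde submatrix determinant factorization (Proposition \ref{van1}) together with the root-of-unity criterion (Corollary \ref{unity}), the paper recomputing the determinant explicitly for sequences starting at the first row and declaring other starting rows ``similar.'' Your uniform use of Proposition \ref{van1}, with the observation that the scalar prefactor is nonzero because all entries are nonzero, handles the arbitrary starting row that the paper leaves implicit.
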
 

\section{Decoding}\label{decoding} 


The following decoding methods  are sourced from \cite{hurleycomp} which is an
application of Pellikaan's decoding method using  error correcting pairs
\cite{pell} when such exist. 

Error correcting pairs were introduced by  Pellikaan \cite{pell} and
Duursma \& K\"otter \cite{duur}. The method of Pellikaan is found  more
useful here and in \cite{hurleycomp};  the decoding algorithm of
Pellikaan has a precise translation into a linear algebra method for the
codes constructed here 
 as
explained in Section 3 of \cite{hurleycomp}.     

\subsection{Preliminaries}\label{prelim1}

First some preliminaries are required.
Let $F$ be a field and $\C$ a (linear) code over $F$. Write 
$n(\C)$ for the code length of $\C$, its minimum distance is denoted 
by $d( \C )$ and denote its dimension by $k ( \C )$. 

Now $w_i$ denotes the $i^{th}$ component
of $w\in F^n$. 
For any  $w \in F^n$  define the support of $w$
by $\supp( w ) = \{ i | w_i \neq  0 \}$
and the zero set of $w$ by $z ( w ) = \{ i | w_i = 0 \}$ . The weight of $w$ is
the number of non-zero coordinates of $w$ and denote it by $wt ( w )$. The
number of elements of a set $I$ is denoted by $| I |$. Thus $wt(a)=
|\supp(w)|$.

We say that $w$ has $t$
errors supported at $I$ if $w = c + e$ with $c \in \C$ and $I = \supp( e
)$ and $| I | = t = d ( w , \C )$. 

The bilinear form $< , >$ is
defined by $< a , b > = \sum_i a_i b_i$. For a subset $C$ of $F^n$, 
the dual $C^\perp$ of $C$ in $F^n$ with respect to the bilinear form $< ,
>$ is defined by $C^\perp= \{ x | < x , c > = 0,   \forall c \in C \}$.

As usual the sum of two elements of $F^n$ is defined by adding corresponding 
coordinates.  Of use in these considerations is what is termed 
the {\em star multiplication} $a * b$ of two elements $a, b \in F^n$
defined by multiplying corresponding coordinates, that is $( a* b )_i = a_i
b_i$. For subsets $A$ and $B$
of $F^n$ denote the set $\{ a * b | a \in A, b \in B \}$ by $A * B$. If
$A$ is generated by $X$ and $B$ is generated by $Y$ then $A*B$ is
generated by $X*Y$. 

\begin{Definition}\label{error}
Let $A, B$ and $C$ be linear codes in $F^n$. We call $( A, B
)$ a {\em $t$-error correcting pair for $C$} if 
\\ 1) $A * B \subseteq C^\perp $ \\ 2) $k ( A ) > t$ \\  3) $d
( A ) + d ( C ) > n$, \\ 4) $d ( B^\perp) > t$. 

\end{Definition}

For more information on this consult \cite{pell}. 

Consider now a Fourier $n\ti n$ matrix. It is shown below that error-correcting pairs exist for codes generated by the rows of this Fourier matrix where the rows are taken in succession or in arithmetic sequence $k$ with $\gcd(n,k)=1$. 

Let $F= F_n$ be a Fourier
$n\ti n$ matrix with $\om$ as the element of order $n$. 

Denote the  rows of $F$ in order by $\{E_0, E_1, \ldots,
E_{n-1}\}$. It is easily checked that $E_iE_{n-i}\T=n$ and $E_iE_j\T = 0$
for $j\neq n-i \mod n$. Thus 

$$\begin{pmatrix}E_0 \\ E_1 \\ \vdots \\ E_{n-1}\end{pmatrix} (E_0\T,
E_{n-1}\T, E_{n-2}\T, \ldots, E_1\T) = nI_n$$

Call this the {\em Fourier Equation} for future reference. 

Note that if $H$ is a check matrix for a code then also $\al H$ is a
check matrix for the code for any $\al \neq 0$. 

We write out the details for the cases where the first $r$ rows are
taken as the generator matrix. The cases where rows are taken in
succession or where rows are taken in arithmetic sequence $k$ with $\gcd(n,k)=1$ are similar; in
all cases it requires getting error-correcting pairs and working from
there. 

The general Vandermonde case with restriction on cases where the
rows are taken in arithmetic sequence, is given in Section \ref{van1}.

Suppose then $\mathcal{C}$ is the code obtained by
taking the first $r$ rows of $F$. Thus $\mathcal{C}= \langle E_0,
E_1, \ldots, E_{r-1} \rangle$. Then $\mathcal{C}^\perp$ is $\langle
E_1, E_2, \ldots, E_{n-r}\rangle$ which can also be obtained by
eliminating the first $r$ columns of the second matrix on the left 
in the Fourier Equation.



Note that $E_i*E_j = E_{i+j}$ where suffices are taken $\mod n$. 
Let $A = \langle E_1, E_2 \ldots, E_{t+1}, \rangle, B = \langle
E_0, E_1, \ldots, E_{t-1}\rangle $ when $(n-r)$ is even and $t=\frac{n-r}{2}$, and 
let $A=\langle E_1, E_2, \ldots, E_{t+1} \rangle, B= \langle E_0, E_1, \ldots,
E_t \rangle $ when $(n-r)$ is odd and $t=\floor{\frac{n-r}{2}}$.

{\bf Then it may be verified that $A,B$ is a $t$-error correcting pair for
$\mathcal{C}$.}

Thus: 
 
\begin{enumerate}
\item $A*B \subseteq \mathcal{C}^\perp$
\item $k(A) > t$
\item $d(A) + d(\mathcal{C}) > n$
\item $d(B^\perp) > t$
\end{enumerate} 

This gives the following algorithm for locating and quantifying up to
$t$ errors for the code $\mathcal{C}$. 
In \cite{hurleycomp} the method of error-correcting pairs of Pellikaan
\cite{pell} is translated into an algorithm for decoding codes defined
by rows in succession or in (certain) arithmetic sequences of a
Vandermonde
/Fourier
matrix.   This may be applied directly here.

Let  $C$ be the $r \ti n$ generator matrix of $\mathcal{C}$. 
Suppose now $\al$ is a $1\ti r$ codeword,  that $\al C$ is sent but that $\al C + \ep$ is received  for a $1\ti n$ vector $\ep$ with at most $t$ non-zero entries. 
 
Assume $(n-r)$ is even; the other case is similar. 
Thus we are assuming $n-r=2t$. Now $\mathcal{C}^\perp$ is a check matrix for the code and thus $\ep E_1, \ep E_2, \ldots, \ep E_{n-r}$ are known by applying the check matrix to $\al C + \ep$. Let $\al_i=\ep E_i$ for $i=1,2,\ldots, n-r (=2t)$.

The algorithm then is:

\begin{Algorithm}\label{al1}

\begin{enumerate}

\item Find a non-zero solution of the kernel of the $t\ti (t+1)$ Hankel matrix

$\begin{pmatrix} \al_1 & \al_2 & \al_3 & \ldots, &\al_{t+1} \\ \al_2 & \al_3 & \al_4 & \ldots & \al_{t+2} \\ \vdots & \vdots & \vdots & \vdots & \vdots \\ \al_t & \al_{t+1} & \al_{t+2} & \ldots & \al_{2t} \end{pmatrix}$.

Call this solution $\underline{x}\T$ which is a $(t+1) \ti 1$ vector.

\item Let $\underline{a} = (E_1,E_2, \ldots, E_{t+1})\underline{x}\T$ which is a $1\ti n $ vector.





      (Any non-zero multiple of $\underline{a}$
      will suffice as we are only interested  in the zero entries of
      $\underline{a}$. Note that $\underline{a}$ is a $1\times n$ vector.)
\item Let $z(\underline{a})= \{j | 
a_j=0\}$  which is the set of locations of the zero coordinates of
$\underline{a}$. Suppose $z(\underline{a})=\{j_1,j_2, \ldots, j_t\}$ and denote this set by $J$.
\item Solve $s_J(x) = s(w)$. This reduces to solving the following.
Here $E_i=(E_{i,1}, E_{i,2}, \ldots, E_{i,n})$.
\begin{eqnarray}\label{est2}
\begin{pmatrix}E_{1,j_1} & E_{1,j_2} & \ldots &E_{1,j_t}\\ E_{2,j_1} &
 E_{2,j_2} & \ldots &E_{2,j_t} \\ \vdots & \vdots & \vdots &\vdots \\ 
E_{2t,j_1} & E_{2t,j_2} & \ldots
&E_{2t,j_t}\end{pmatrix}\begin{pmatrix}x_1 \\ x_2 \\ \vdots \\ x_t
			\end{pmatrix}
= \begin{pmatrix} \al_1 \\ \al_2 \\ \vdots \\ \al_{2t} \end{pmatrix}   
\end{eqnarray}

\item Now since in this case $E_{i,j} = \om^{i*j}$ the equation \ref{est2} may be put in the form 
\begin{eqnarray}\label{est3}
\begin{pmatrix} 1 & 1 & \ldots & 1\\ \om^{j_1} & \om^{j_2} & \ldots &\om^{j_t} \\ \om^{2j_1} &
 \om^{2j_2} & \ldots &\om^{2j_t} \\ \vdots & \vdots & \vdots &\vdots \\ 
\om^{(2t-1)j_1} & \om^{(2t-1)j_2} & \ldots
&\om^{(2t-1)j_t}\end{pmatrix}\begin{pmatrix}\om^{j_1}x_1 \\\om^{j_2} x_2 \\ \vdots \\  \om^{j_t}x_t
			\end{pmatrix}
= \begin{pmatrix} \al_1 \\ \al_2 \\ \vdots \\ \al_{2t} \end{pmatrix}   
\end{eqnarray}

(This form shows that the equation to be solved is a Vandermonde system containing roots of unity but not a (full) Fourier matrix.)
\item 
The value of $w$ is then the solution of  equations (\ref{est2}) or equivalently equations (\ref{est3})
 with entries in appropriate places as determined by $J$.
\end{enumerate}
 
\end{Algorithm}
\subsection{In arithmetic sequence}\label{general1} 

Suppose 
 $A$ is an 
$n\ti n$  Fourier  
 matrix with rows $\{E_0,E_1,\ldots, E_{n-1}\}$; these rows satisfy $E_i*E_j =
 E_{i+j}$. 

The  $E_jw$ are known for $j \in J= \{j_1, j_2,
\ldots, j_u\}$  where $u\geq 2t$. The elements in $J$ are in arithmetic
progression with difference $k$  satisfying $\gcd(n,k)=1$. 
Then $w$ is calculated by the following algorithm.
Let $\al_k = <w,F_{j_k}> = F_{j_k}w$ for $j_k\in J$.   
Define $F_i=E_{j_i}$ for $j_i \in J$ 
and $F_0=E_{j_1-k}$ with indices taken$\mod n$.
Let $F_i=(F_{i,1}, F_{i,2},\ldots, F_{i,n})$.  

\begin{Algorithm}\label{algor1}

\quad
\begin{enumerate}

\item Find a non-zero  element $x\T$ of the kernel of 
$E=\begin{pmatrix} \al_1 &\al_2 & \ldots & \al_{t+1} \\ \al_2 & \al_3 &
  \ldots & \al_{t+2} \\ \vdots & \vdots & \vdots & \vdots \\ \al_t &
  \al_{t+1} & \ldots & \al_{2t}\end{pmatrix}$.
\item Let $\underline{a}= (F_0,F_1,\ldots, F_{t})x\T$. (Any non-zero multiple of $a$
      will suffice as we are only interested  in the zero entries of
      $\underline{a}$. Note that $\underline{a}$ is a $1\times n$ vector.)
\item Let $z(\underline{a})= \{j | a_j=0\}$  which is the set of locations of the zero coordinates of
$\underline{a}$. Suppose $z(\underline{a})=\{j_1,i_2, \ldots, j_t\}$ and denote this set by $J$.
\item Solve $s_J(x) = s(w)$. This reduces to solving the following:
\begin{eqnarray}\label{est1}
\begin{pmatrix}F_{1,j_1} & F_{1,j_2} & \ldots &F_{1,j_t}\\ F_{2,j_1} &
 F_{2,j_2} & \ldots &F_{2,j_t} \\ \vdots & \vdots & \vdots &\vdots \\ 
F_{2t,j_1} & F_{2t,j_2} & \ldots
&F_{2t,j_t}\end{pmatrix}\begin{pmatrix}x_1 \\ x_2 \\ \vdots \\ x_t
			\end{pmatrix}
= \begin{pmatrix} \al_1 \\ \al_2 \\ \vdots \\ \al_{2t} \end{pmatrix}   
\end{eqnarray}

\item 
The value of $w$ is then the solution of these equations with entries in
appropriate places as determined by $J$.
\end{enumerate}
 
\end{Algorithm}

In Algorithm \ref{al1} it is shown that the equations (\ref{est2}) are  equivalent to a Vandermonde system of equations (\ref{est3}); similarly here  it can be  seen that the equations in (\ref{est1}) are equivalent to  a Vandermonde system with roots of unity as entries (but not the full Fourier matrix).  







\subsection{The general Vandermonde case}\label{vander}
 Working with a general Vandermonde matrix introduces difficulties as the inverse is not always nice to work with. However error-correcting algorithms can be formulated in many cases and we briefly discuss these cases here. \footnote{(This 
general Vandermonde case can be done  similar to that of Section 7  of \cite{hurleycomp} although in that paper the field is $\cc$.)  }


Consider the  Vandermonde matrix

$V=V(\all_1,\all_2,\ldots,\all_n) = \begin{pmatrix}
1&1&\ldots &1 \\ \all_1& \all_2& \ldots& \all_n \\ \vdots & \vdots &
\vdots & \vdots \\ \all_1^{n-1} & \all_2^{n-1} & \ldots &
\all_n^{n-1} \end{pmatrix}$ 

We assume the $\all_i$ are distinct and non-zero. 
 
Denote the rows of $V$ in order by $\{E_0,E_1,\ldots, E_{n-1}\}$. Then
$E_i*E_j=E_{i+j}$ as long as $i+j\leq n$. 
  
 Define
$E_k$ to be  $(\all_1^k,\all_2^k, \ldots, \all_n^k)$ for any $k\in
 \Z$. The rows of $V$ are $\{E_0,E_1,\ldots, E_{n-1}\}$ and these have been extended.  
\begin{lemma}\label{prod} $E_i*E_j = E_{i+j}$.
\end{lemma}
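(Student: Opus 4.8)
The plan is to verify the identity one coordinate at a time, since the statement is really just a restatement of the index-additivity of powers. The key input is the extended definition introduced just above the lemma: for \emph{any} integer $k$ we set $E_k = (\all_1^k, \all_2^k, \ldots, \all_n^k)$, so that $E_0, E_1, \ldots, E_{n-1}$ are the literal rows of $V$ and the symbols $E_k$ for other $k$ are defined by the same formula. The other ingredient is the definition of star multiplication from Section \ref{prelim1}: $(a * b)_l = a_l b_l$ for each coordinate $l$.

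First I would fix an index $l \in \{1, 2, \ldots, n\}$ and compute the $l$-th coordinate of $E_i * E_j$. By the definition of $*$ this is $(E_i)_l (E_j)_l$, and by the extended definition of $E_k$ this equals $\all_l^{\,i} \cdot \all_l^{\,j}$. Since the $\all_l$ are elements of the field $\F$, the usual law of exponents gives $\all_l^{\,i} \all_l^{\,j} = \all_l^{\,i+j}$, which is precisely the $l$-th coordinate of $E_{i+j}$. As $l$ was arbitrary, the vectors $E_i * E_j$ and $E_{i+j}$ agree in every coordinate and are therefore equal.

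There is essentially no obstacle here: the content of the lemma is exactly that powers add under multiplication, transported through the entrywise product. The one point worth emphasizing, and the reason the lemma is stated separately at all, is that with the extended definition of $E_k$ the identity now holds for \emph{all} integers $i, j$, removing the earlier restriction $i + j \leq n$ that appeared when only the genuine rows $E_0, \ldots, E_{n-1}$ were available. In the Fourier case one had $E_i * E_j = E_{i+j}$ with indices read modulo $n$ because $\om^n = 1$ wraps the powers around; in the general Vandermonde case no such wrap-around occurs, so the extension of the indexing beyond $\{0, \ldots, n-1\}$ is exactly what makes the clean identity $E_i * E_j = E_{i+j}$ available for use in the decoding argument that follows.
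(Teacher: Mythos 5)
Your proof is correct and is essentially the paper's own argument: the paper disposes of the lemma with the single observation that $\all^i\all^j=\all^{i+j}$, which is exactly the coordinatewise computation you write out in detail. Your additional remark about why the extended definition of $E_k$ makes the identity hold without the restriction $i+j\leq n$ is a fair gloss on the surrounding text, but it adds nothing beyond what the paper already states.
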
 
\begin{proof} This is simply because $\be^i\be^j = \be^{i+j}$.
\end{proof}

Let $ \mathcal{C}^\perp =
\langle E_{j_1}, E_{j_2}, \ldots,
E_{j_u}\rangle$, where $u=2t$.  If $\mathcal{C}^\perp $ has rows in
arithmetic sequence with arithmetic difference $k$ 
and the ratios $\be_i\be_i^{-1}$ for $i\neq j$ in $V$ are not $k^{th}$ roots
of unity then $\mathcal{C}$ (the dual of $\mathcal{C}^\perp$) is an $(n,n-2t,2t+1)$ code, see Proposition \ref{vanderm},  and is $t$-error
correcting with  $C^\perp$ as the check matrix. 
Then also $\mathcal{C}$ has
an error correcting pair and a decoding   Algorithm may be derived. 
However it is not easy to describe $\mathcal{C}$
itself for this general Vandermonde case. 





Let $\al_i=<w,E_{j_i}> = E_{j_i}w\T$ for $j_i\in J$. Let $F_i=E_{j_i}$
for $j_i\in J$.
Thus $\al_i=<w,F_i>$.

\medskip 

\begin{Algorithm}\label{algor10} 

\quad
\begin{enumerate}[label=(\roman*),ref=(\roman*)]

\item Find a non-zero  element $v\T$ of the kernel of 
$E=\begin{pmatrix} \al_1 &\al_2 & \ldots & \al_{t+1} \\ \al_2 & \al_3 &
  \ldots & \al_{t+2} \\ \vdots & \vdots & \vdots & \vdots \\ \al_t &
  \al_{t+1} & \ldots & \al_{2t}\end{pmatrix}$.\label{eqs64}
\item Let $a= (F_1,F_2,\ldots, F_{t+1})v\T$. 

\item Let $z(a)= \{j | 
a_j=0\}$  which is the set of locations of the zero coordinates of
$a$. 
Suppose $z(a)=\{i_1,i_2, \ldots, i_t\}$ and denote this set by $J$.\label{eqs30}
\item\label{matrix10} Solve $s_J(x) = s(w)$. This reduces to solving the following:
\begin{eqnarray}\label{est10} 
\begin{pmatrix}\all_{i_1}^{j_1} & \all_{i_2}^{j_1} & \ldots
  &\all_{i_t}^{j_1} \\ \all_{i_1}^{j_2} &
 \all_{i_2}^{j_2} & \ldots &\all_{i_t}^{j_2} \\ \vdots & \vdots & \vdots &\vdots \\ 
\all_{i_1}^{j_{2t}} & \all_{i_2}^{j_{2t}} & \ldots
&\all_{i_t}^{j_{2t}} \end{pmatrix}\begin{pmatrix}x_1 \\ x_2 \\ \vdots \\ x_t
			\end{pmatrix}
= \begin{pmatrix} \al_1 \\ \al_2 \\ \vdots \\ \al_{2t} \end{pmatrix}   
\end{eqnarray}

Since the entries in the matrix of (\ref{est10}) have arithmetic difference $k$ giving  that $j_s=
i_1+(s-1)k$ for $1\leq s \leq 2t$, the equation (\ref{est10}) is equivalent to 
\begin{eqnarray}\label{est11}
\begin{pmatrix} 1 & 1 & \ldots
  & 1 \\ \all_{i_1}^{k} &
 \all_{i_2}^{k} & \ldots &\all_{i_t}^{k} \\ \vdots & \vdots & \vdots &\vdots \\ 
\all_{i_1}^{(2t-1)k} & \all_{i_2}^{(2t-1)k} & \ldots
&\all_{i_t}^{(2t-1)k} \end{pmatrix}\begin{pmatrix}\be_{i_1}^{j_1}x_1
  \\ \be_{i_2}^{j_1}x_2 \\ \vdots  \\ \be_{i_t}^{j_1}x_t
			\end{pmatrix}
= \begin{pmatrix} \al_1 \\ \al_2 \\ \vdots \\ \al_{2t} \end{pmatrix}   
\end{eqnarray}
\item 
Then $x=(x_1,x_2,\ldots,x_t)$ is obtained from these equations
(\ref{est11}) (or from (\ref{est10})) and $w$ has   
 entries $x_i$ in positions  as determined by $J$ and zeros elsewhere.
\end{enumerate}
 
\end{Algorithm}

The matrix in (\ref{est11}) is a Vandermonde matrix. It
is sufficient to solve the first $t$ equations and the inverse of such a
$t\times t$ Vandermonde type matrix may be obtained in $O(t^2)$
operations.  In connection with item \ref{eqs64}, finding a non-zero
element of the kernel of a Hankel $t\times (t+1)$ matrix can be done in $O(t^2)$
or less operations.  


\section{Code to a rate and error capability}\label{rate} Suppose an mds code of rate $R=\frac{r}{n}$ is
required. 

It is required to obtain over a finite field a Fourier $n\ti n$ matrix.

We can take $n$ to be as large as necessary as $\frac{r}{n}
= \frac{rs}{sn}$ for any positive integer $s$.

Let $p$ be a prime not dividing $n$. Then by Euler's theorem, 
$p^{\phi(n)}\equiv 1 \mod n$ where $\phi$ is the Euler $\phi$ function. 
Thus $p^{\phi(n)}- 1 = nq$ for some positive integer $q$. Consider the
field $\F=GF(p^{\phi(n)})$. Then a primitive generator, $\be$ say, of the field has
order  $(p^{\phi(n)}-1)=nq$. Then $\be^q = \om$ has order $n$ in
$\F$. Construct the Fourier $n\ti n$ matrix, $F_{n}$, over $\F$ using $\om$ as the
element of order $n$.   Now by the method of the previous sections,
$(n,r,n-r+1)$ codes can be constructed with efficient decoding
algorithms from $F_n$. 

For a prime $p$ not dividing $n$ we know that there exists a positive
integer $q$ such that $p^q \equiv 1 \mod n$. So for best results take $q$
to be the smallest such positive integer and do the calculations in
$GF(p^q)$. 

If $n$ is odd then $2\not\vert n$ and so the Fourier matrix can be
obtained over $GF(2^k)$ for some $k$, where $2^k \equiv 1 \mod n$. For
example if $n=103$ then the order of $2$ mod $103$ is $51$ and so the
Fourier matrix may be obtained over $GF(2^{51})$. Making the calculations
over $GF(2^s)$ has advantages in that codes over such a field may be
transmitted as binary digits. 

Suppose a rate $R=\frac{r}{n}$ is required and in addition $t$ errors
may need to be corrected. 
Then it is required that $t= \floor{\frac{n-r}{2}}$. Assume $n-r$ is
even; the other case is similar. Then it is required that 
$t=\frac{n-r}{2} =\frac{n(1-R)}{2}$.   
 
\subsection{Examples}\label{example1} 

\subsubsection{Rate $\frac{5}{7}$} Suppose a rate of $\frac{5}{7}$ is required and that $t=50$
errors should be correctable. This gives that $\frac{n(1-\frac{5}{7})}{2}
=t=50$ which requires $n=350$. Thus a code $(350,250,101)$ is
required. Thus construct a Fourier $350\ti 350$ matrix over a field. Now
$3$ is a prime not dividing $n=350$ and the order of $3 \mod 350$ is
$60$. Thus this required Fourier matrix exists over $GF(3^{60})$. Also the
order of $11 \mod 350$ is $15$ and the field $GF(11^{15})$ may also be
used.  A little investigation shows that the order of $43 \mod 350$ is $4$ so the field $GF(43^4)$ could be used.

Let the required rate again be $\frac{5}{7}$ and now it is required that $t=49$ errors be correctable. This gives that $\frac{n(1-\frac{5}{7})}{2}
=t=49$ which requires $n=343$. Require a $(343, 245, 99)$ code. 
  Since $n$ is odd it is possible to find a field $GF(2^s)$ which has a $343^{th}$ root of unity. The order of $2 \mod 343$ is $147$ so the field $GF(2^{147})$ could be used but  would be large. However the order of $19 \mod 343$ is $6$ so it is possible to work in $GF(19^6)$. 

Let the required rate again be $\frac{5}{7}$ and now it is required that $t=48$ errors be correctable. This gives that $\frac{n(1-\frac{5}{7})}{2}
=t=48$ which requires $n=336$. Require a $(336, 240, 97)$ code. 
  Now note that $337$ is prime so can work in the prime field $GF(337)$ which involves modular arithmetic. An element of order $336$ is required in $GF(337)$ and this is easily found. For example  the order of $10 \mod 337$ is $336$ and thus $\om = 10 \mod 337$ may be used as the element of order $336$ in forming the Fourier $336\ti 336$ matrix over $GF(337)$. Here the arithmetic is  modular arithmetic,  which is nice. 
 
\subsubsection{Rate  $\frac{31}{32}$}
Suppose a rate $\frac{31}{32}$ is specified and we would like the code to correct at least 50 errors. Then for $(n,r,n-r+1)$ we need $n-r \geq 2* 50 = 100$ and so need for $R=\frac{31}{32}$ that $n*\frac{1}{32} \geq 100$ which is $n\geq 3200$. We would also like to work with modular arithmetic. 
 Now notice that $3201$ is not a prime but that $3203$ is a prime. Thus let $n=3202$ and construct the code $(3202, 3102,101)$ over the prime field $GF(3203)$. This code has rate slightly less ($0.0000019..$) than $\frac{31}{32}$. To have rate of $\frac{31}{32}$  and still work over a prime field take $n=104*32=3328$ and then $n+1=3329$ is prime. Here we work over the prime field $GF(3329)$ and get the code $(3328,3224,105)$ which can correct  $52$ errors. 

The order of $2 \mod 3203$ is $3220$ so $2 \mod 3203$ may be used in as
the element of order $3202$ for the Fourier $3202\ti 3202$ matrix in
$GF(3203)$. All the non-zero elements of $GF(3203)$ are used for this
Fourier matrix. From it codes of all forms $(3203, r, 3203-r+1)$ may be
obtained for $1\leq r \leq 3202 $.      

\subsection{Rate $\frac{3}{4}$; correct lots}
Suppose for example a code is required that could correct $50$ errors
and have a rate of $\frac{3}{4}$. The code of smallest length satisfying
these conditions is one of the form $(400,300,101)$, How could such a code be constructed? One way is to
construct a Fourier $400 \ti 400$ matrix and select three quarters of the 
rows in order so that an mds
code is generated. Thus select 300 rows in
sequence from the Fourier matrix. What is the smallest field over which
such a $400\ti 400$ Fourier can exist? What is the field of smallest
characteristic over which such a Fourier matrix can exist? Now
$\phi(400)= 160$ so we need the smallest field $GF(p^s)$ such that $p^s
\equiv 1 \mod 400$ with $\gcd(400,p)=1$ and $s|160$ as necessary requirements. Here it is
found that  $7^4 \equiv 1 \mod 400$ so we can use the
field $GF(7^4)$. This is the smallest field for which there exists a
$400\ti 400$ Fourier matrix. Now $7^4=2401$ and thus the field is
relatively small and its characteristic is small. The $400\ti 400$ Fourier matrix over $GF(7^4)$ can be used to find the $(400,300,101)$ code  but it can also be used to find  $(400,r, 401-r) $ codes over $GF(7^4)$. For example $(400,350, 51)$ code can correct $25$ errors and $(400,200,201)$ code over $GF(7^4)$ can correct $100$ errors.

Consider constructing a code of rate $\geq \frac{3}{4}$ and which can correct $50$ errors  but now require the code to be over $GF(p)$ for a prime $p$. 
Now the order of the non-zero elements of $GF(p)$ is $p-1$ and we require $n=p-1
\geq 400$. It turns out that $p=401$ is a prime which is the least prime
$p$ for which $p\geq 400$. Now the order of $2
\mod 401$ is $200$ so using $2 \in GF(401)$ doesn't work but the order
of $3 \mod 401$ is $400$. Hence let  $\om = 3 \mod 401$ and form the
$400 \ti 400 $ Fourier matrix $F_{400}$ over $GF(401)$ with $\om$ as a primitive
$400^{th}$ root of unity. Now choose the first $300$ rows of $F_{400}$
or any consecutive $300$ rows in $F_{400}$ gives a $(400,300,101)$ code
as required.  

The calculations are done $\mod 401$. Error correcting pairs are
also obtainable  from the unit Fourier scheme which are then used for
the efficient decoding algorithms. 

Which are better, the codes over $GF(7^4)$ or the codes over $GF(401)$? 

\subsection{Remark}

Many such constructions are possible. Codes over $GF(2^s)$, for $s $ not
too large, and codes over prime fields may be particularly useful. 

\subsection{`Optimal' codes from a given  field}\label{best} Suppose the field $GF(p^s)$ is given and it is required to construct the best possible codes with coefficients from this field. Let $n=p^s-1$. Then there exists an element $\om$ of order $n$ in $GF(p^s)$ and every non-zero element is a power of this  generator. Form the Fourier $n\ti n$ matrix using $\om$ as a primitive $n^{th}$ root of unity. Unit-derived codes are then formed using rows of $F$ in succession or else in arithmetic sequence $k$ satisfying $\gcd(n,k) = 1$. For any $ 1\leq r \leq n$, mds codes of the form $(n,r,n-r+1)$ may be constructed from the rows of this Fourier matrix. The Fourier matrix uses all the non-zero elements of $GF(p^s)$. 

These are the best performing codes from $GF(p^s)$; the lengths are $p^{s}-1$ and all possible rates $\frac{r}{n}$ with $r\leq n$ are available.

\section{Shannon}\label{Shannon}
Here we relate the previous Hamming results to Shannon results.


For a given rate $1\geq R>0$ the previous sections give methods for
constructing $(n,r,n-r+1)$ codes where $\frac{r}{n}=R$. The probability
of error is the probability that more than $k=\floor{\frac{n-r}{2}}$
errors occur in the binomial distribution with $p$ the probability that
an error occurs at a component. Here $\mu = np$. 

Chernoff's bounds \cite{cher} give the following:

$\Pr[X\geq (1+\de)\mu] \leq (\frac{e^\de}{(1+\de)^{1+\de})})^{\mu} \leq e^{\frac{-\de^2}{2+\de}\mu} = e^{\frac{-\de^2}{2+\de}np}$  
 for
$\de > 0$. 


$\Pr[X\leq (1-\de)\mu] \leq (\frac{e^{-\de}}{(1-\de)^{1-\de})})^{\mu} < 
(\frac{e^{-\de}}{e^{-\de+\de^2/2}})^\mu < e^{-\de^2\mu/2}$ 
for $0< \de \leq 1$. 


Now consider a code $(n,r, n-r+1)$ which can correct $k =
\floor{\frac{n-r}{2}}$ errors and has an efficient decoding algorithm. Assume $n-r$ is even; the other case
is similar; thus $k=\frac{n-r}{2}$. Now $r = nR$ where $R$ is the
rate. 

For the first Chernoof inequality to hold it is required that   $(1+\de)np= k+1 = 1+ \frac{n-r}{2} = 1+ \frac{n(1-R)}{2}$.
Thus $(1 + \de) = \frac{1}{np} + \frac{1-R}{2p}$ and thus 
$\de= \frac{1-R}{2p}-1 + \frac{1}{np}$. We require $\de >0$ and so
require $\frac{1-R}{2p}- 1 + \frac{1}{np} >0 $. Multiply across by $2p$
and this requires $(1-R) + 2/n > 2p$ which is equivalent to $R< 1-2p +
2/n$. For $n$ large enough make $R<1-2p + 2/n$. Then 
the probability of error is $< e^{\frac{-\de^2}{2+\de}np}$. 


Now $R < 1- 2p+ \frac{2}{n}$ means that $R$ can be as close to $1-2p$ as necessary and then the probability of error is less than $ e^{-\ga n}$ for some $\ga > 0$. Note that $p< \frac{1}{2}$ implies that $1-2p > 0$ and then $R> 0$ also for $n$ big enough.

For the second Chernoff inequality to hold requires $(1-\de)\mu = \frac{n-r}{2}$
which is $(1-\de)np = \frac{n(1-R)}{2}$; this requires
$(1-\de) \frac{1-R}{2p}$ and hence $ -\de = \frac{1-R}{2p}-1$. Now $\de
> 0 $ requires $\frac{1-R}{2p} -1 < 0$ in which case require
$R>1-2p$. For $\de \leq 1$ requires $-\de \geq -1$ in which case it is
required that $\frac{1-R}{2p}- 1 \geq -1$ from which it is required that
$\frac{1-r}{2p} \geq 0$ from which it is required that $1\geq R$, which
is true. Thus the second Chernoff inequality can be applied for $R>
1-2p$. Thus for $R> 1-2p$ the probability of no error is less than
$e^{-\de^2\mu/2} $.  Thus for $n$ big enough 
the probability of error is $\geq \frac{1}{2}$.

In order to construct a $(n,r,n-r+1)$ code over a finite field by the
 unit-derived method with Fourier/Vandermonde matrices 
it is necessary to have a field $F=GF(p^k)$ such  that $n |
 (p^k-1)$. The 
rate is $R=\frac{r}{n}$ 
 and $n$ can be taken to be as large as necessary as $\frac{r}{n}
= \frac{rs}{sn}$ for any positive integer $s$.

Let $p$ be a prime not dividing $n$. Then by Euler's theorem, 
$p^{\phi(n)}\equiv 1 \mod n$ where $\phi$ is the Euler $\phi$ function. 
Thus $p^{\phi(n)}- 1 = nq$ for some positive integer $q$. Consider the
field $\F=GF(p^{\phi(n)})$. Then a primitive generator $\be$ of the field has
order  $p^{\phi(n)}-1$. Then $\be^q = \om$ has order $n$ in
$F$. Construct the Fourier $n\ti n$ matrix, $F_{n}$,  over $\F$ using $\om$ as the
element of order $n$.   Now by the method of the previous sections,
$(n,r,n-r+1)$ codes can be constructed with efficient decoding
algorithms from $F_n$. 

For a prime $p$ not dividing $n$ we know that there exists a positive
integer $q$ such that $p^q \equiv 1 \mod n$. So for best results take $q$
to be the smallest such positive integer and do the calculations in
$GF(p^q)$. 

If a `rate' $H$ is required which is not a rational number then take the
`nearest' rational number to $H$.

\section{Complexity}\label{complexity} 
The decoding calculations require finding a non-zero element in the kernel of a Hankel $t\ti (t+1)$ matrix. Finding the kernel of 
an $t\ti (t+1)$ Hankel matrix can be done in $O(t^2)$ operations. 
Super-fast algorithms of $O(t\log^2t)$ have been proposed with which to find the kernel of a Hankel $t\ti (t+1)$ matrix.  

It is then required to solve a system of $2t \ti t$ equations where the coefficients on the left of the matrix are roots of unity; solving the first $t\ti t$ equations is sufficient. The matrix of the system of $t\ti t$ equations reduces to a  Vandermonde matrix whose entries are roots of unity. 
Now the system can be solved in $O(t^2)$ operations. The entries of the Vandermonde matrix are roots of unity in a finite field which make the calculations easier and stable. 

Consider the case where the encoder is the first part (first rows) of a Fourier matrix.
Thus we are in the situation $\begin{pmatrix} A \\ B \end{pmatrix} (C,D) = I$
where $\begin{pmatrix} A \\ B\end{pmatrix}$ is a Fourier matrix and $(C,D) $ is a multiple ($\frac{1}{n}$ for length $n$) of a Fourier matrix.
 
The encoding is $\al \mapsto \al A$ where $A$ is part of a Fourier matrix $F=\begin{pmatrix} A  \\ B \end{pmatrix}$. Thus by adding $0^s$ to the end of $\al$ to get $\bar{\al}$ of length $n$ ensures the encoding can be done by (Fast) Fourier Transform if necessary. 

Similarly the decoding can be done by (Fast) Fourier Transform when the the errors have been eliminated as $\al A C = \al$ and $C$ is part of a Fourier matrix  $(C,D)$. In fact $\al A (C, D) = (\al AC, \al AD) = (\al, 0)$. 


Thus the calculations can all be done in at worst the maximum of $O(n\log n)$ and  $O(t^2)$ for length $n$ and error-correction $t$. The $t^2$ is a function of the error-correction capability $t$. Now in the vast majority of  cases  the required distance $2t+1$ satisfies $t\leq \sqrt(n)$; in these  cases all the calculations can be done in at worst $O(n\log n)$ calculations. If super fast calculations of the kernel of a Hankel $t\ti (t+1)$ matrix are employed as proposed then the calculations can be done in $\max\{O(n\log n), O(t\log^2t)\}$ operations. This is certainly of $O(n\log n)$ when $\log^2t \leq n$ or $\log t \leq \sqrt{\log n}$.

\end{document}